\documentclass[lettersize,journal]{IEEEtran}
\usepackage{amsmath,amsfonts}
\usepackage{amsthm}
\newtheorem{theorem}{Theorem}
\usepackage{algorithmic}
\usepackage{algorithm}
\usepackage{array}
\usepackage[caption=false,font=normalsize,labelfont=sf,textfont=sf]{subfig}
\usepackage{textcomp}
\usepackage{stfloats}
\usepackage{url}
\usepackage{verbatim}
\usepackage{graphicx}
\usepackage{caption}
\usepackage{cite}
\usepackage{enumitem} 
\hyphenation{op-tical net-works semi-conduc-tor IEEE-Xplore}

\begin{document}

\title{A Parallel Region-Adaptive Differential Privacy Framework for Image Pixelization}
\author{Ming Liu%
		\thanks{Ming Liu is with Shanghai Jiao Tong University, Shanghai, China (email: liuming198904@sjtu.edu.cn).}%
	}
\maketitle

\begin{abstract}
The widespread deployment of high-resolution visual sensing systems, coupled with the rise of foundation models, has amplified privacy risks in video-based applications. Differentially private pixelization offers mathematically guaranteed protection for visual data through grid-based noise addition, but challenges remain in preserving task-relevant fidelity, achieving scalability, and enabling efficient real-time deployment. To address this, we propose a novel parallel, region-adaptive pixelization framework that combines the theoretical rigor of differential privacy with practical efficiency. Our method adaptively adjusts grid sizes and noise scales based on regional complexity, leveraging GPU parallelism to achieve significant runtime acceleration compared to the classical baseline. A lightweight storage scheme is introduced by retaining only essential noisy statistics, significantly reducing space overhead. Formal privacy analysis is provided under the Laplace mechanism and parallel composition theorem. Extensive experiments on the PETS, Venice-2, and PPM-100 datasets demonstrate favorable privacy–utility trade-offs and significant runtime/storage reductions. A face re-identification attack experiment on CelebA further confirms the method's effectiveness in preventing identity inference. This validates its suitability for real-time privacy-critical applications such as elderly care, smart home monitoring, driver behavior analysis, and crowd behavior monitoring.
\end{abstract}

\begin{IEEEkeywords}
Differential Privacy, Pixelization, Region-Adaptive Processing, GPU Acceleration, Visual Privacy.
\end{IEEEkeywords}

\section{Introduction}
\IEEEPARstart{T}{he} rapid deployment of high-resolution video surveillance systems and smart cameras across applications such as public safety, healthcare monitoring, and intelligent transportation has intensified concerns about individual privacy\cite{10.5555/1568647}. These systems often capture unstructured visual data that are directly used in downstream tasks, including feature extraction (e.g., pose estimation\cite{8765346}) and high-level applications (e.g., fall detection\cite{BEDDIAR2022103407} or pedestrian tracking\cite{GUO2025126772}\cite{s24216922}). Meanwhile, the rise of large language models (LLMs) has led to a growing trend where users feed such unstructured data, including images\cite{lee2024llmcxr}\cite{10657904}\cite{10678286}, videos\cite{NEURIPS2024_d7ce06e9}, and audio\cite{hu-etal-2024-wavllm} into powerful models for analysis without proper safeguards. This practice poses a serious risk of leaking sensitive Personally Identifiable Information (PII)\cite{10.1145/3712001} embedded in visual or acoustic content \cite{10.5555/3620237.3620531}.

To address such privacy and information leakage risks, privacy-preserving visual data processing has become a critical research focus. The goal is to anonymize or obfuscate identifiable features in visual data while maintaining the structural and semantic integrity necessary for critical tasks\cite{10657054}\cite{7775034}. Differential privacy (DP)\cite{10.1007/11787006_1}, with its rigorous mathematical guarantees, offers a promising solution for limiting individual information leakage during data sharing or analysis\cite{10.1561/0400000042}. The original differentially private pixelization method\cite{10.1007/978-3-319-95729-6_10} combines pixelization with DP to prevent re-identification. It provides a rigorous mathematical proof of privacy protection, strengthening its theoretical guarantees. It balances privacy and utility by using grid cells and controlled noise, reduces attack success rates, and allows for customizable privacy levels, making it an effective approach for privacy-preserving image and video sharing.

However, applying DP pixelization to visual data remains challenging, particularly in maintaining task-relevant fidelity for segmentation, behavior analysis, or action recognition, which often lacks efficiency and structural adaptability. The CPU-based approach is not scalable for high-resolution data, and the uniform grid-based scheme tends to either over-sanitize or under-protect specific regions. Moreover, the lack of efficient storage and restoration mechanisms further complicates the deployment in real-time or resource-constrained applications. 

Motivated by these limitations, we aim to develop a parallel and region-adaptive pixelization framework that ensures both computational scalability and visual utility. Specifically, our goal is to design an algorithm that (1) exploits GPU parallelism for substantial runtime acceleration, (2) seamlessly differentiates foreground and background regions to achieve better privacy-utility trade-offs, and (3) enables compact storage of noisy statistics for efficient reconstruction and downstream processing. Furthermore, in our experiments, we aim to systematically identify optimal pixelization configurations to maintain utility under differential privacy constraints. Considering these factors, this paper makes the following key contributions:
\begin{itemize}[leftmargin=1em, labelsep=0.5em]
	\item We design an efficient parallel version of differentially private pixelization algorithm, reducing runtime through GPU acceleration and achieving over 30$\times$ speedup over the classical baseline. We thoroughly examine the impact of grid size and padding on processing time.
	
	\item We propose a region-adaptive parallel DP pixelization algorithm that applies varying grid sizes and noise scales based on regional complexity of an image. The non-uniform treatment of complex and simple regions helps preserve essential structures such as human contours. The algorithm also ensures full image coverage, avoiding both unpixelated and excessively pixelated areas.
	
	\item We introduce a lightweight storage and restoration scheme by storing only essential metadata such as noisy grid means, noisy subgrid means, region mask means, grid size, and image size, instead of the full pixelated image. This significantly reduces space complexity compared to full image storage.
	
	\item We provide a formal privacy analysis based on the Laplace mechanism and parallel composition theorem. Experiments on PETS~\cite{5597139}, Venice-2~\cite{MOTChallenge2015}, and PPM-100~\cite{MODNet} demonstrate comparable privacy–utility trade-offs and runtime/storage efficiency. The results show that our method preserves structural fidelity. A face re-identification experiment on CelebA~\cite{liu2015faceattributes} further confirms the method’s effectiveness in mitigating identity inference.
	
\end{itemize}

\section{Related Work}
In zero-trust environments, data custodians may often share unstructured data with third parties for analysis. In practice, such sharing arises in scenarios like cloud-based healthcare monitoring, intelligent transportation analytics, and public safety surveillance, where sensitive visual data must be transmitted. This necessitates rigorous privacy mechanisms beyond access control, motivating the adoption of Differential Privacy (DP). Differential Privacy (DP) provides mathematical guarantees by injecting noise calibrated to a privacy budget $\epsilon$, limiting the influence of any single record. Dwork et al.~\cite{10.1561/0400000042}\cite{10.1007/11681878_14} formalized this via mechanisms such as the Laplace mechanism.

Local Differential Privacy (LDP) protects data in untrusted settings by requiring each user to perturb their data locally, sharing only noisy outputs~\cite{doi:10.1137/090756090}\cite{6686179}. This removes the need for a trusted curator. For example,  RAPPOR~\cite{10.1145/2660267.2660348} achieves large-scale anonymous collection via randomized response. In contrast, centralized methods like P3SGD~\cite{8953573} add noise during training but require server trust. Our work focuses on LDP for large-scale, real-time scenarios where privacy must be ensured at data generation.

Compared to other privacy-preserving methods, local differential privacy offers clear advantages in untrusted environments. Other traditional obfuscation techniques are vulnerable to re-identification via body shape or clothing; for instance, Google's blurring system~\cite{5459413} fails to prevent recognition based on pose. Realistic anonymization~\cite{10209019} lacks formal guarantees, may distort semantic features, and requires costly training. Techniques such as secure multi-party computation~\cite{10474335}, trusted execution environments~\cite{10.1145/3634737.3644993}, and federated learning~\cite{10677989} focus on secure computation but do not protect visual data if devices are compromised. In contrast, DP injects calibrated noise at the source, offering formal, task-independent guarantees and tunable privacy–utility trade-offs.

According to~\cite{10.1145/3490237}, differential privacy has been applied to address the limitations of conventional visual obfuscation. Pixel DP~\cite{10.1007/978-3-319-95729-6_10} adds Laplace noise to image grids, offering strong privacy. Latent Vector DP~\cite{10.1257/pandp.20191109} perturbs feature space selectively to preserve realism. Euclidean Privacy~\cite{8784836}\cite{9578149} adds multivariate noise to SVD features, enabling region-specific protection. Among them, DP pixelization~\cite{10.1007/978-3-319-95729-6_10} provides strong privacy via $m$-neighborhoods, $b \times b$ grid pixelization, and noise addition, achieves fast processing (e.g., 66 ms per image), resists deep re-ID models, and requires no training, making it lightweight and practical compared to CNN- or GAN-based methods~\cite{9578149}\cite{li2021differentiallyprivateimaginglatent}.

\begin{table}[htbp]
	\small
	\caption{Pros and Cons of the Pixel DP Method}
	\label{tab:pixel_dp_pros_cons_en}
	\centering
	\begin{tabular}{|p{2cm}|p{6cm}|}
		\hline
		\textbf{Aspect} & \textbf{Description} \\
		\hline
		\multicolumn{2}{|c|}{\textbf{Pros}} \\
		\hline
		Privacy Strength & Provides the strongest DP protection. Even with high $\epsilon$, it effectively prevents re-identification, making it the most difficult method for attackers to reverse. \\
		\hline
		Simple Implementation & Based on grid partitioning and Laplace mechanism; does not require deep learning models or latent space construction. Suitable for low-resource environments. \\
		\hline
		No Model Dependency & Unlike GAN-based methods \cite{9578149}\cite{li2021differentiallyprivateimaginglatent}, Pixel DP requires no model training, avoiding training costs and hyperparameter tuning. \\
		\hline
		\multicolumn{2}{|c|}{\textbf{Cons}} \\
		\hline
		Image Quality & Pixel-level perturbation significantly degrades visual quality and harms downstream task performance. \\
		\hline
		Perceptual Distortion & Compared to methods operating in high-level semantic space \cite{9578149}\cite{li2021differentiallyprivateimaginglatent}, Pixel DP is less effective at preserving image structure and semantics. \\
		\hline
	\end{tabular}
\end{table}

A summary of the main advantages and limitations of Pixel DP~\cite{10.1007/978-3-319-95729-6_10} is given in Table~\ref{tab:pixel_dp_pros_cons_en}~\cite{10.1145/3490237}. Despite strong privacy guarantees, this low-level method causes semantic loss and low perceptual quality, as reflected by lower SSIM and FID scores~\cite{9578149}\cite{li2021differentiallyprivateimaginglatent}\cite{10.1145/3490237}. Its uniform grid design lacks region adaptivity, yields suboptimal privacy–utility tradeoffs, does not scale well to high-resolution data, and lacks support for efficient storage.

We aim to address the limitations of Pixel DP. The key insight of this paper is that image regions vary in complexity, encoding important structural cues. Complex regions (e.g., faces, bodies, vehicles) contain identity-related features, while simple regions (e.g., sky, walls) are less informative and tolerate coarser processing. This distinction defines visual boundaries and contours, such as human edges, which are critical for scene understanding. To ensure consistent privacy, a uniform differential privacy budget should be applied globally, while protection is adaptively adjusted based on local content. Such region-adaptive mechanism is essential for maintaining formal privacy guarantees while improving utility in downstream visual tasks.

We propose a parallel region-adaptive pixelation framework under differential privacy, where the distinction between complex and simple regions is flexible and customizable for specific applications. In this work, simple background regions are pixelated with coarse grids and Laplace noise, satisfying $\epsilon$-differential privacy while reducing computation time, whereas human regions (identified via segmentation or silhouette extraction) are treated as complex and processed with fine-grained subgrid pixelation. The simple-region grid size must be a multiple of the complex-region subgrid size, ensuring the entire image is covered without leaving unprotected or redundantly processed regions. By tuning parameters ($\epsilon$, $b$, $n$), the framework avoids the extremes of conventional anonymization methods: leaving the background unprotected, which can directly expose sensitive contextual information; or fully masking foreground/background, which can erase attributes such as clothing details, motion dynamics, or environmental cues. Instead, our structure-aware design enables a controllable balance between privacy and utility, benefiting tasks such as elderly care and accident monitoring where contextual and behavioral cues are critical.

We evaluate our parallel DP pixelization algorithm on high-resolution datasets PETS and Venice-2, considering grid size, maximum pixel variation, and privacy budget. Performance is measured using MSE, SSIM~\cite{1284395}, and runtime metrics. Compared to the baseline, our GPU-accelerated version achieves up to 13.3× and 31.5× speedups on PETS and Venice-2, respectively. Further evaluation on PPM-100 confirms that region-adaptive pixelization preserves semantic content while ensuring privacy. The effectiveness of our metadata storage method is validated through experiments on these datasets. Moreover, a face re-identification experiment on CelebA further demonstrates the robustness of our method and its flexibility in privacy–utility customization.

\section{Notations and Symbol Definitions}
To ensure clarity and consistency, we summarize all mathematical symbols used in the algorithms and analysis. Table~\ref{tab:notation_summary} lists the notations and descriptions, categorized by their roles in image structure, grid partitioning, differential privacy, subgrid processing, and related components. This facilitates understanding of the algorithm, theory, and experiments in later sections.

\begin{table}[!t]
	\caption{Notations and Symbol Definitions}
	\label{tab:notation_summary}
	\centering
	\begin{tabular}{|c|p{6.8cm}|}
		\hline
		\textbf{Symbol} & \textbf{Description} \\
		\hline
		\multicolumn{2}{|l|}{Image and Structure} \\
		\hline
		$\mathbf{I}$ & Input grayscale image of size $M \times N$ \\
		$\mathbf{I}(i,j)$ & Pixel value at position $(i,j)$ in $\mathbf{I}$ \\
		$\mathbf{I'}$ & Pixelized image with same size as $\mathbf{I}$ \\
		$M, N$ & Height and width of image $\mathbf{I}$ \\
		$b$ & Grids size for grid partitioning \\
		$m$ & Maximum number of allowed varying pixels (sensitivity parameter) \\
		\hline
		\multicolumn{2}{|l|}{Grid and Subgrid Structures} \\
		\hline
		$G$ & Set of all non-overlapping grids, where each grid \( G_{r,c} \) is of size \( b \times b \), indexed by row \( r \) and column \( c \) \\
		$G_{r,c}$ & Grid at row $r$ and column $c$, size $b \times b$ \\
		$G_R, G_C$ & Number of row-wise and column-wise grids: $G_R = \lceil \frac{M}{b} \rceil$, $G_C = \lceil \frac{N}{b} \rceil$ \\
		$r, c$ & Grid indices in vertical and horizontal directions \\
		$\mathbf{I}_{\text{grid}}$ & Reshaped grid image of size $(G_R, G_C, b, b)$ \\
		\hline
		\multicolumn{2}{|l|}{Pixelization and Mean Values} \\
		\hline
		$\mu_{G_{r,c}}$ & Mean pixel value of grid $G_{r,c}$ \\
		$\mathbf{P}_b(\mathbf{I})$ & Pixelized vector representation of image \\
		$\mu_G$ & Mean of pixel values in each $b \times b$ grid \\
		$\mu'_G$ & Noisy mean after adding $\eta$ to $\mu_G$ \\
		\hline
		\multicolumn{2}{|l|}{Differential Privacy} \\
		\hline
		$\epsilon$ & Privacy budget in $\epsilon$-differential privacy \\
		$\delta$ & Relaxation parameter in $(\epsilon, \delta)$-differential privacy \\
		$\mathcal{M}$ & Randomized mechanism for privacy \\
		$\mathcal{M}_i$ & Mechanism applied to disjoint subset $D_i$ \\
		$D, D'$ & Neighboring datasets differing by one record \\
		$D_i$ & Disjoint subset of dataset $D$ \\
		$k$ & Number of disjoint subsets $D_i$ \\
		$S$ & Subset of possible outputs of $\mathcal{M}$ \\
		$f$ & Function to be privatized \\
		$\Delta f$ & Sensitivity of function $f$ \\
		\hline
		\multicolumn{2}{|l|}{Noise and Sensitivity} \\
		\hline
		$\eta$ & Laplace noise sampled as $\text{Laplace}(0, \sigma)$ \\
		$\eta_{r,c}$ & Laplace noise added to grid cell $(r,c)$ \\
		$\sigma$ & Noise scale\\
		$\Delta$ & Sensitivity \\
		$\Delta \mathbf{P}_b$ & Global sensitivity of $\mathbf{P}_b(\mathbf{I})$ \\
		$\tilde{\mathbf{P}}_b(\mathbf{I})$ & Noisy pixelized vector of the image \\
		$\mathbf{I}_1, \mathbf{I}_2$ & Neighboring images differing in at most $m$ pixels \\
		\hline
		\multicolumn{2}{|l|}{Subgrid and Region Complexity} \\
		\hline
		$\mathbf{M}$ & Binary mask indicating simple vs. complex regions \\
		$\mathbf{M}_{\text{padded}}$ & Padded version of $\mathbf{M}$ \\
		$\mathbf{M}_{\text{grid}}$ & Reshaped mask grid of size $(G_R, G_C, b, b)$ \\
		$\mu_{\text{mask}}$ & Mean of each grid in $\mathbf{M}_{\text{grid}}$ \\
		$\mu_{\text{simple}}$ & Boolean mask indicating simple regions ($>0.5$) \\
		$\mu_{\text{complex}}$ & Boolean mask indicating complex regions ($\leq 0.5$) \\
		$n$ & Subgrid division factor in complex regions ($b = n \cdot s_b$) \\
		$s_b$ & Subgrid size ($s_b = b/n$) \\
		$G_s$ & Subgrids extracted from complex regions \\
		$\mu_{G_s}$ & Mean of subgrids $G_s$ \\
		$\eta_s$ & Laplace noise for $G_s$ \\
		$\mu'_{G_s}$ & Noisy mean after adding $\eta_s$ to $G_s$ \\
		$\Delta_s$ & Sensitivity for $s_b \times s_b$ subgrid \\
		$\sigma_s$ & Noise scale for subgrid \\
		
		\hline
		\multicolumn{2}{|l|}{Padding} \\
		\hline
		$P_R, P_C$ & Number of pixels added for padding rows and columns \\
		$\mathbf{I}_{\text{padded}}$ & Image after padding to $(G_R \cdot b, G_C \cdot b)$ \\
		\hline
	\end{tabular}
\end{table}

\section{The Original Differentially Private Pixelization}
\subsection{Image pixelization}
Consider a gray scale input image $\mathbf{I}$ defined as a two-dimensional matrix \(\mathbf{I} \in \mathbb{R}^{M \times N}\), where each element \(\mathbf{I}(i, j)\) represents the pixel value at position \((i, j)\) in the image, with \(i \in \{1, 2, \dots, M\}\) and \(j \in \{1, 2, \dots, N\}\).We assume the image is grayscale, so \(\mathbf{I}(i, j)\) is a scalar value in the range \([0, 255]\). 
The pixelization process can be described as partitioning the original image $\mathbf{I}$ into non-overlapping grids in the spatial domain. Specifically, the image $\mathbf{I}$ is divided into several grids \(G\) based on a grid size parameter \(b\), where each grid $G_{r,c}$ has dimensions \(b \times b\) and satisfies:
\begin{equation}
	\begin{aligned}
		G_{r,c} = \{\mathbf{I}(i, j) \mid & \, i \in [r \cdot b, (r+1) \cdot b - 1], \\
		& \, j \in [c \cdot b, (c+1) \cdot b - 1]\}
	\end{aligned}
\end{equation}
where $G_{r,c}$ denote the $r,c$-th grid cell of an image $\mathbf{I}$ with dimensions $M \times N$, and \(r \in \{0, 1, \dots, G_R - 1\}\) and \(c \in \{0, 1, \dots, G_C - 1\}\) are the indices of the grid in the image, and \(G_R = \left\lceil \frac{M}{b} \right\rceil\) and \(G_C = \left\lceil \frac{N}{b} \right\rceil\) are the grid dimensions. Therefore, the image is divided into $\left\lceil \frac{M}{b} \right\rceil \left\lceil \frac{N}{b} \right\rceil$ grid cells.
For each grid \(G_{r,c}\), a new pixel value is generated by computing the mean of all pixel values within the grid:
\begin{equation}
	\mu_{G_{r,c}} = \frac{1}{b^2} \sum_{i=r \cdot b}^{(r+1) \cdot b - 1} \sum_{j=c \cdot b}^{(c+1) \cdot b - 1} I(i, j)
\end{equation}
where \(\mu_{G_{r,c}}\) is the mean pixel value of the grid \(G_{r,c}\), which is used as the uniform value for all pixels within that grid.
Next, a new pixelized image $\mathbf{I'}$ is generated, which retains the same dimensions as the original image. However, each pixel value is substituted with the corresponding grid mean \(\mu_{G_{r,c}}\):
\begin{equation}
	\mathbf{I'}(i, j) = \mu_{G_{r,c}},
\end{equation}
\begin{equation*}
	\text{where } i \in [r \cdot b, (r+1) \cdot b - 1], \quad j \in [c \cdot b, (c+1) \cdot b - 1].
\end{equation*}
The pixelization of the image $\mathbf{I}$ can then be represented as a vector $\mathbf{P}_b(\mathbf{I})$ of length $G_R \times G_C$:
\begin{equation}
	\begin{aligned}
		\mathbf{P}_b(\mathbf{I}) = \left\{ \frac{1}{b^2} \sum_{(i,j) \in G_{r,c}} \mathbf{I}(i,j) \mid r = 0, 1, \dots, G_R - 1, \right. \\
		\left. c = 0, 1, \dots, G_C - 1 \right\}.
	\end{aligned}
\end{equation}

The resulting pixelized image $\mathbf{I'}$ is a simplified version of the original image $\mathbf{I}$, where spatial resolution is reduced, and image details are blurred. The original detailed information is replaced by larger uniform grids of mean values, visually appearing as larger, more uniform grids. Through these steps, the pixelization technique simplifies the visual details of the image while providing a degree of privacy protection for potentially sensitive information contained within the image.

\subsection{Differential Privacy and the Laplace Mechanism}

Differential privacy (DP) provides a rigorous mathematical framework to quantify privacy guarantees when analyzing data. Specifically, an mechanism $\mathcal{M}$ is said to satisfy $(\epsilon, \delta)$-differential privacy if, for any two neighboring datasets $D$ and $D'$, which differ by at most one element, and for all subsets $S$ of the mechanism's output space, the following holds:
\begin{equation}
	\mathbb{P}[\mathcal{M}(D) \in S] \leq e^{\epsilon} \times \mathbb{P}[\mathcal{M}(D') \in S] + \delta
\end{equation}
In this paper, we focus on the pure $\epsilon$-differential privacy (where $\delta = 0$), and thus the inequality becomes:
\begin{equation}
	\mathbb{P}[\mathcal{M}(D) \in S] \leq e^{\epsilon} \times \mathbb{P}[\mathcal{M}(D') \in S]
\end{equation}
Here, $\epsilon$ is the privacy budget, controlling the level of privacy protection. A smaller $\epsilon$ implies stronger privacy.

The Laplace mechanism is one of the most commonly used methods to achieve differential privacy. For a given function $f$ with sensitivity $\Delta f$, which is defined as:
\begin{equation}
	\Delta f = \max_{D, D'} ||f(D) - f(D')||
\end{equation}
The Laplace mechanism adds noise drawn from a Laplace distribution to the output of $f$. The noise is parameterized by a scale $\sigma$, which is directly related to the privacy budget $\epsilon$:
\begin{equation}
	\sigma = \frac{\Delta f}{\epsilon}
\end{equation}
The noise $\eta$ added to the output is drawn from the Laplace distribution:
\begin{equation}
	\eta \sim \text{Laplace}(0, \sigma)
\end{equation}
This ensures that the perturbed output satisfies $\epsilon$-differential privacy.

\subsection{Parallel Composition Theorem}
The Parallel Composition Theorem \cite{10.1145/1559845.1559850} ensures that when applying differentially private mechanisms to disjoint subsets of the data, the overall privacy guarantee remains bounded by the worst-case guarantee, not the sum of the individual guarantees.

\begin{theorem}\label{thm:parallel_composition}
Let $\mathcal{M}_i$ be a differentially private mechanism providing $\epsilon$-differential privacy. Let $D_i$ represent an arbitrary disjoint subset of the input domain $D$, such that:
\begin{equation}
	D = \bigcup_{i=1}^{k} D_i, \quad D_i \cap D_j = \emptyset \text{ for } i \neq j.
\end{equation}
where $i$ and $j$ are indices representing different subsets $D_i$ and $D_j$ of the input data set $D$, 
$k$ is the total number of disjoint subsets into which the data set $D$ is partitioned.
The sequence of mechanisms $\mathcal{M}_i(X \cap D_i)$ applied to each disjoint subset $D_i$ ensures that the entire system satisfies $\epsilon$-differential privacy. 

\end{theorem}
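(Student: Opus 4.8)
The plan is to reduce the global guarantee to a single local one by exploiting disjointness of the blocks. First I would fix two neighboring datasets $X$ and $X'$ that differ in exactly one record $x$. Since $D_1,\dots,D_k$ partition the input domain and are pairwise disjoint, the record $x$ lies in exactly one block, say $D_{i^\star}$. Consequently, for every index $i \neq i^\star$ we have $X \cap D_i = X' \cap D_i$, so the restricted inputs fed to all mechanisms other than $\mathcal{M}_{i^\star}$ are \emph{identical} on $X$ and $X'$; only the input to $\mathcal{M}_{i^\star}$ changes, and it changes by a single record, so the $\epsilon$-DP hypothesis on $\mathcal{M}_{i^\star}$ applies to it.

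Next I would model the composite mechanism $\mathcal{M}(X) = \bigl(\mathcal{M}_1(X \cap D_1),\dots,\mathcal{M}_k(X \cap D_k)\bigr)$, noting that the mechanisms use independent internal randomness, so the joint output law is the product of the individual output laws. For an output event of product form $S = S_1 \times \cdots \times S_k$, the probability factorizes as $\mathbb{P}[\mathcal{M}(X)\in S] = \prod_{i=1}^{k} \mathbb{P}[\mathcal{M}_i(X \cap D_i) \in S_i]$. Each factor with $i \neq i^\star$ is the same for $X$ and $X'$, while the $i^\star$ factor obeys $\mathbb{P}[\mathcal{M}_{i^\star}(X \cap D_{i^\star}) \in S_{i^\star}] \le e^{\epsilon}\, \mathbb{P}[\mathcal{M}_{i^\star}(X' \cap D_{i^\star}) \in S_{i^\star}]$. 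Multiplying the bounds gives $\mathbb{P}[\mathcal{M}(X) \in S] \le e^{\epsilon}\, \mathbb{P}[\mathcal{M}(X') \in S]$ for all product sets $S$, with no dependence on $k$.

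Finally I would lift the inequality from rectangles to an arbitrary measurable subset $S$ of the joint output space. The product sets form a $\pi$-system generating the product $\sigma$-algebra, and the class of sets on which the desired inequality holds is stable under the required operations, so a standard monotone-class / Dynkin argument extends the bound to every measurable $S$; in the discrete-output setting one may instead simply sum the rectangle bound over output singletons. This establishes $\epsilon$-differential privacy for the whole system, completing the proof.

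I expect the main obstacle to be bookkeeping on the output-space side rather than anything conceptually deep: one must state the independence of the mechanisms' randomness explicitly and justify the passage from product events to general measurable events. A secondary subtlety worth handling is the notion of ``neighboring'' when a record is added or removed rather than modified; even then $x$ falls into a unique block $D_{i^\star}$, so the same factorization and cancellation go through verbatim.
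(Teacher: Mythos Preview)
Your argument is the standard, correct proof of parallel composition: locate the single changed record in a unique block $D_{i^\star}$, observe that all other restricted inputs are unchanged, factor the joint output distribution by independence, apply the $\epsilon$-DP bound to the one affected factor, and extend from product events to arbitrary measurable events. Nothing is missing, and the remarks on independence and on add/remove versus modify neighboring relations are apt.

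The paper, however, does not actually prove this theorem. It states it with a citation to McSherry and follows it only with an informal paraphrase (``The Parallel Composition Theorem states that \ldots the privacy loss does not accumulate across the disjoint subsets \ldots''). So your proposal is not merely consistent with the paper's approach---it supplies a proof where the paper offers none. If anything, you could trim the monotone-class discussion for this venue: in the image-pixelization setting the outputs are finite-dimensional real vectors with Laplace densities, so a density-ratio argument (bounding the ratio of joint densities pointwise by $e^{\epsilon}$) handles all measurable $S$ directly without invoking Dynkin's lemma.
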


The Parallel Composition Theorem states that if a dataset $D$ is divided into disjoint subsets $D_1, D_2, \dots, D_k$, and each subset is analyzed using a differentially private mechanism with privacy budget $\epsilon$, the overall system maintains $\epsilon$-differential privacy. The privacy loss does not accumulate across the disjoint subsets but is instead controlled by the privacy guarantee of the individual mechanisms applied to each subset.

\subsection{The Original Differentially Private Pixelization}
The original Pixelization with Differential Privacy\cite{10.1007/978-3-319-95729-6_10} first pixelizes an input image and then applies Laplace noise to each grid cell to guarantee privacy. 
The global sensitivity $\Delta \mathbf{P}_b$ of the pixelization function $\mathbf{P}_b(\mathbf{I})$ is defined as:
\begin{equation}
	\Delta \mathbf{P}_b = \max_{\mathbf{I}_1, \mathbf{I}_2} \|\mathbf{P}_b(\mathbf{I}_1) - \mathbf{P}_b(\mathbf{I}_2)\| = \frac{255m}{b^2}
\end{equation}
where $\mathbf{I}_1$ and $\mathbf{I}_2$ are any neighboring images differing in at most $m$ pixels. Since each pixel difference is bounded by 255, this defines the sensitivity. To ensure differential privacy, Laplacian noise is added to the pixelized vector:
\begin{equation}
	\tilde{\mathbf{P}}_b(\mathbf{I}) = \mathbf{P}_b(\mathbf{I}) + \eta
\end{equation}
The noise $\eta = \{\eta_{r,c} \mid r = 0, 1, \dots, G_R - 1, \, c = 0, 1, \dots, G_C - 1\}$ is drawn from a Laplace distribution with mean 0 and scale $\sigma = \frac{255m}{b^2\epsilon}$, where $\epsilon$ is the privacy budget.
The perturbed pixelization vector $\tilde{\mathbf{P}}_b(\mathbf{I})$ is then clipped to the range $[0, 255]$ to ensure valid pixel values. According to the definition of differential privacy, the mechanism $\tilde{\mathbf{P}}_b$ guarantees $\epsilon$-differential privacy, as described by Theorem \ref{thm:algorithm_privacy}.

\begin{theorem}\label{thm:algorithm_privacy}
	Mechanism $\tilde{\mathbf{P}}_b$ guarantees $\epsilon$-differential privacy.
\end{theorem}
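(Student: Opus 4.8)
The plan is to establish $\epsilon$-differential privacy for $\tilde{\mathbf{P}}_b$ by viewing it as the simultaneous release of one noisy statistic per grid, proving that each per-grid release is $\epsilon$-DP, and then combining these via the Parallel Composition Theorem (Theorem~\ref{thm:parallel_composition}). Throughout, two images $\mathbf{I}_1,\mathbf{I}_2$ are neighboring when they differ in at most $m$ pixels, and each pixel lies in $[0,255]$.

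First I would note that the $b\times b$ grids $\{G_{r,c}\}$ partition the pixel index set: every pixel belongs to exactly one grid and the grids are pairwise disjoint, which is exactly the disjointness hypothesis required by Theorem~\ref{thm:parallel_composition}. Second, for a fixed grid $G_{r,c}$ I would bound the sensitivity of its mean: if $\mathbf{I}_1$ and $\mathbf{I}_2$ differ in at most $m$ pixels overall, they differ in at most $\min(m,b^2)$ pixels inside $G_{r,c}$, so the mean changes by at most $\tfrac{255\min(m,b^2)}{b^2}\le\tfrac{255m}{b^2}=\Delta\mathbf{P}_b$. Third, since each coordinate of the noise is $\eta_{r,c}\sim\text{Laplace}(0,\sigma)$ with $\sigma=\tfrac{255m}{b^2\epsilon}\ge\tfrac{\Delta\mathbf{P}_b}{\epsilon}$, the scalar Laplace mechanism from the preceding subsection implies that the perturbed grid mean $\mu_{G_{r,c}}+\eta_{r,c}$ is $\epsilon$-differentially private (indeed at least that strong). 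Fourth, clipping the perturbed value to $[0,255]$ is a fixed, data-independent map applied to the mechanism's output, so by the post-processing invariance of differential privacy the clipped per-grid release remains $\epsilon$-DP. Finally, applying Theorem~\ref{thm:parallel_composition} with the disjoint subsets taken to be the grids $G_{r,c}$ yields that the joint output $\tilde{\mathbf{P}}_b(\mathbf{I})$ --- the vector of all clipped noisy grid means --- satisfies $\epsilon$-differential privacy.

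I expect the main obstacle to be the sensitivity bookkeeping under the ``at most $m$ pixels'' adjacency: one must argue that even though the $m$ altered pixels may be scattered across arbitrarily many grids, (i) every single grid still sees at most $m$ (in fact at most $\min(m,b^2)$) alterations, so the common noise scale $\tfrac{255m}{b^2\epsilon}$ is adequate for each grid, and (ii) parallel composition charges the privacy budget $\epsilon$ only once, not once per affected grid, precisely because the grids are disjoint --- this is the crux of why the mechanism does not degrade to $(k\epsilon)$-DP. A minor point to dispatch cleanly is the boundary grids when $M$ or $N$ is not a multiple of $b$ (equivalently, after padding): such grids hold at most $b^2$ genuine pixels, so their mean sensitivity is still at most $\tfrac{255m}{b^2}$ and the estimate is unaffected. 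For completeness I would also remark that, reading $\|\cdot\|$ in $\Delta\mathbf{P}_b$ as the $L_1$ norm, the vector-valued Laplace mechanism with independent per-coordinate noise of scale $\Delta\mathbf{P}_b/\epsilon$ gives an equivalent one-line proof, consistent with the parallel-composition route.
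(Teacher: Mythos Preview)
Your primary route via Theorem~\ref{thm:parallel_composition} has a genuine gap. Parallel composition, as stated in the paper and in its standard proof, delivers a $\max_i\epsilon_i$ guarantee because a single adjacency step alters the data in exactly one subset $D_i$, so only one $\mathcal{M}_i$ contributes a nontrivial likelihood ratio. Under the $m$-neighborhood adjacency, the $m$ altered pixels may lie in several grids simultaneously; a single adjacency step then perturbs several per-grid outputs at once, the likelihood ratios multiply across the affected grids, and bounding each factor by $e^\epsilon$ yields only $e^{k\epsilon}$ when $k$ grids are touched --- exactly the $(k\epsilon)$-DP degradation you claim to avoid. Your assertion in~(ii) that disjointness alone ensures the budget is charged once is therefore false for this adjacency: disjointness of the grids does not prevent a single $m$-neighbor step from hitting many of them.

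What actually makes $\tilde{\mathbf{P}}_b$ $\epsilon$-DP is the argument you relegate to a closing remark, and it is the paper's own proof: with $\|\cdot\|$ read as $L_1$, the global sensitivity of the whole vector is $\sum_{r,c}|\mu_{G_{r,c}}(\mathbf{I}_1)-\mu_{G_{r,c}}(\mathbf{I}_2)|\le \tfrac{255}{b^2}\sum_{r,c}m_{r,c}\le\tfrac{255m}{b^2}$, where $m_{r,c}$ counts altered pixels in $G_{r,c}$ and $\sum m_{r,c}\le m$, and the vector Laplace mechanism with per-coordinate scale $\Delta\mathbf{P}_b/\epsilon$ then gives $\epsilon$-DP directly. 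The decisive point is that the grids \emph{share} the pool of $m$ altered pixels, which caps the $L_1$ sensitivity; parallel composition never sees this sharing. If you want a route that legitimately uses Theorem~\ref{thm:parallel_composition}, show each grid release is $(\epsilon/m)$-DP under the $1$-pixel adjacency, apply parallel composition to get $(\epsilon/m)$-DP globally for that adjacency, and then invoke group privacy over $m$ pixels. A minor additional slip: your boundary-grid estimate is backwards --- a grid with $g<b^2$ genuine pixels has mean sensitivity $255\min(m,g)/g$, which (for $m\le g$) is at least $255m/b^2$, not at most.
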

\begin{proof}
	The sensitivity of the pixelization process is $\Delta \mathbf{P}_b = \frac{255m}{b^2}$, the noise $\eta$ added to each element of the pixelization vector is drawn from a Laplace distribution with scale $\sigma = \frac{\Delta \mathbf{P}_b}{\epsilon} = \frac{255m}{b^2\epsilon}$, ensuring that the probability of the mechanism's output does not differ by more than a factor of $e^\epsilon$ for any neighboring images $\mathbf{I}_1$ and $\mathbf{I}_2$. Therefore, the mechanism $\tilde{\mathbf{P}}_b$ satisfies $\epsilon$-differential privacy.

\end{proof}

\begin{algorithm}[t]
	\caption{The Original Differentially Private Image Pixelization \cite{10.1007/978-3-319-95729-6_10}}
	\label{alg:original_dp_pixelization} 
	\begin{algorithmic}[1]
		\REQUIRE Image $\mathbf{I}$ of size $M \times N$, grid size $b$, privacy budget $\epsilon$, max variation $m$
		\ENSURE Restored Image $\mathbf{I'}$
		\STATE Initialize $\mathbf{I'} \gets \mathbf{0}_{M \times N}$
		\STATE Calculate sensitivity $\Delta \gets \frac{255m}{b^2}$
		\STATE Calculate noise scale $\sigma \gets \frac{\Delta}{\epsilon}$
		\FOR{$r = 0$ to $M-1$ by $b$}
		\FOR{$c = 0$ to $N-1$ by $b$}
		\STATE $h \gets \min(b, M - r)$
		\STATE $w \gets \min(b, N - c)$
		\STATE Extract grid $G_{r,c} \gets \mathbf{I}[r:r+h, c:c+w]$
		\STATE Calculate mean of grid $\mu_{G_{r,c}} \gets \text{mean}(G_{r,c})$
		\STATE Calculate noise $\eta_{r,c} \leftarrow \text{Laplace}(0, \sigma)$
		\STATE Generate noisy mean $\mu_{G_{r,c}}^\prime \gets \mu_{G_{r,c}} + \eta_{r,c}$
		\STATE Clip noisy value $\mu_{G_{r,c}}^\prime \gets \text{clip}(\mu_{G_{r,c}}^\prime, 0, 255)$
		\STATE Fill restored image $\mathbf{I'}[r:r+h, c:c+w] \gets \mu_{G_{r,c}}^\prime$
		\ENDFOR
		\ENDFOR
		\RETURN $\mathbf{I'}$
	\end{algorithmic}
\end{algorithm}
As shown in Algorithm~\ref{alg:original_dp_pixelization}, the original differentially private pixelization algorithm processes an input image $\mathbf{I}$ of size $M \times N$ by dividing it into $b \times b$ grids, achieving pixelation with differential privacy. It initializes an empty matrix $\mathbf{I'}$ of the same size to store the output. The sensitivity $\Delta = \frac{255m}{b^2}$ is calculated, and the noise scale $\sigma = \frac{\Delta}{\epsilon}$ is determined. The algorithm iterates over the image, extracting each grid $G_{r,c}$, computing its mean $\mu_{G_{r,c}}$, and generating a noisy mean $\mu_{G_{r,c}}^\prime = \mu_{G_{r,c}} + \text{Laplace}(0, \sigma)$. This noisy mean is clipped to the range $[0, 255]$ and used to fill the corresponding grid in $\mathbf{I'}$. The resulting image $\mathbf{I'}$ is a pixelated version of $\mathbf{I}$ with added noise to ensure differential privacy.

\subsection{Time and Space Complexity Analysis of Algorithm~\ref{alg:original_dp_pixelization}}
Algorithm~\ref{alg:original_dp_pixelization} follows a grid-wise traversal approach for processing an image of size $M \times N$ with a grid size of $b \times b$. 
Step 1 initializes the output image, which takes $O(1)$ time.
Steps 2 and 3 compute the sensitivity and noise scale using basic arithmetic operations, both having a complexity of $O(1)$.
The algorithm iterates over the image in a nested loop structure. The outer loop (Step 4) iterates over the rows with a step size of $b$, leading to $O(M/b)$ iterations. The inner loop (Step 5) iterates over the columns with a step size of $b$, leading to $O(N/b)$ iterations. Thus, the total number of iterations is $O\left(MN / b^2\right)$. For each grid of size $b \times b$, 
Extracting the grid (Step 8) takes $O(b^2)$ time.
Computing the mean value of the grid (Step 9) requires summing $b^2$ elements and dividing by $b^2$, which results in $O(b^2)$ complexity.
Generating Laplacian noise (Step 10) is an $O(1)$ operation.
Adding noise and clipping the result (Steps 11-12) are both $O(1)$ operations.
Assigning the noisy mean value to all pixels in the grid (Step 13) takes $O(b^2)$ time. Therefore, the total complexity per grid is $O(b^2) + O(b^2) + O(1) + O(1) + O(1) + O(b^2) = O(b^2).$
Since there are $O(MN/b^2)$ such grids, the total complexity of Algorithm~\ref{alg:original_dp_pixelization} is:
\begin{equation}
	O(b^2) \times O\left(MN / b^2\right) = O(MN).
\end{equation}
This is determined by image size. However, grid size $b$ strongly affects sensitivity, noise level, and efficiency: $\Delta \propto 1/b^2$, so larger $b$ reduces noise and iteration count, while smaller $b$ preserves detail but increases cost. Due to its inherently sequential nature, the grid-wise loop impedes parallel execution and results in suboptimal utilization of hardware resources. Vectorized GPU-based parallelization can mitigate the inefficiencies caused by sequential processing. 

The space complexity of Algorithm~\ref{alg:original_dp_pixelization} is $O(M  N)$, dominated by storing $\mathbf{I}$ and $\mathbf{I'}$. The grid $G_{r,c}$ uses at most $O(b^2)$ space. Scalars like $\mu_{G_{r,c}}$, $\eta_{r,c}$, and $\mu_{G_{r,c}}^\prime$ require $O(1)$ space. Thus, the total space complexity is $O(M  N)$.

\section{Parallel Differentially Private Image Pixelization}
This section presents a parallel version of the differentially private image pixelization algorithm. It operates independently across processing units while preserving $\epsilon$-differential privacy via Laplace noise. GPU acceleration using CuPy enables fast and scalable processing for large-scale applications.

\subsection{Parallelizability of Algorithm~\ref{alg:original_dp_pixelization}}
\begin{theorem}
	Algorithm~\ref{alg:original_dp_pixelization} is parallelizable across multiple processing units.
\end{theorem}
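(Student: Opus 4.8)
The plan is to establish parallelizability by exhibiting the grid-wise computation as a collection of mutually independent subtasks, one per grid cell $G_{r,c}$, and arguing that these subtasks share no mutable state. First I would observe that the grid partition $\{G_{r,c} : 0 \le r < G_R,\ 0 \le c < G_C\}$ is a partition of the pixel index set $\{0,\dots,M-1\}\times\{0,\dots,N-1\}$ into disjoint blocks, as follows directly from the index ranges $i \in [r\cdot b,(r+1)\cdot b-1]$, $j \in [c\cdot b,(c+1)\cdot b-1]$ defining each $G_{r,c}$. Consequently the inner body of Algorithm~\ref{alg:original_dp_pixelization} (Steps 6--13) for a given pair $(r,c)$ reads only from $\mathbf{I}[r:r+h,\,c:c+w]$ and writes only to $\mathbf{I'}[r:r+h,\,c:c+w]$, so distinct grid indices touch disjoint regions of both the input and output arrays.

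Next I would spell out the data-dependence analysis. The scalars $\Delta$ and $\sigma$ (Steps 2--3) depend only on the global parameters $b,\epsilon,m$ and are computed once, before the loops; they are read-only thereafter, so replicating or broadcasting them to all processing units introduces no contention. Within iteration $(r,c)$, the local quantities $\mu_{G_{r,c}}$, $\eta_{r,c}$, and $\mu'_{G_{r,c}}$ are private to that iteration, and the Laplace draws $\eta_{r,c} \sim \mathrm{Laplace}(0,\sigma)$ are independent across $(r,c)$ by construction of the mechanism $\tilde{\mathbf{P}}_b$. Hence there are no loop-carried dependencies: no iteration reads a value written by another, and no two iterations write the same memory location. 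By the standard criterion for loop parallelization (absence of flow, anti-, and output dependencies across iterations), the doubly nested loop over $(r,c)$ can be executed in any order, including fully concurrently, yielding the same output $\mathbf{I'}$ as the sequential version. I would then note that this maps naturally onto a SIMD/GPU model: reshape $\mathbf{I}$ into $\mathbf{I}_{\text{grid}}$ of shape $(G_R,G_C,b,b)$, compute all grid means as a single reduction over the last two axes, add a tensor of i.i.d.\ Laplace noise, clip, and broadcast back — each of which is an elementwise or axiswise vector operation.

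The main obstacle I anticipate is the boundary handling when $b \nmid M$ or $b \nmid N$: the border grids have size $h\times w$ with $h=\min(b,M-r)<b$ or $w=\min(b,N-c)<b$, which both breaks the clean reshape-into-$(G_R,G_C,b,b)$ picture and, more subtly, means the sensitivity bound $\Delta\mathbf{P}_b = 255m/b^2$ used uniformly in Step 2 is not tight for those smaller cells (their per-cell sensitivity is $255m/(hw) \ge 255m/b^2$). For the parallelizability claim per se this is not fatal — one can either treat the ragged border cells as a separate (still mutually independent) batch, or, as the paper's later padding scheme does, pad $\mathbf{I}$ to size $(G_R b)\times(G_C b)$ so every cell is exactly $b\times b$. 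I would therefore present the argument first for the padded/divisible case where the tensorization is exact, then remark that the general case follows by the same disjointness argument applied to the irregular partition, deferring the sensitivity-tightness issue to the separate privacy discussion since it does not affect whether the computation can be parallelized. A secondary, minor point to address is that a parallel pseudo-random number generator must supply the independent Laplace samples $\eta_{r,c}$; this is a standard capability (e.g., counter-based RNGs or per-stream seeds) and does not compromise independence.
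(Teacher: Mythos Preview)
Your proposal is correct and follows essentially the same approach as the paper: both argue that the grid cells are disjoint, each cell's computation (mean, Laplace noise, clip) depends only on its own pixels, and hence the per-grid operations can be distributed independently across processing units. Your treatment is considerably more thorough than the paper's brief proof---you add a formal data-dependence analysis, address the ragged-boundary case when $b \nmid M$ or $b \nmid N$, and flag the parallel RNG requirement---but the core argument is the same disjointness-implies-independence reasoning.
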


\begin{proof}
	Let $\mathbf{I}$ be an image of size $M \times N$, divided into $G_R \times G_C$ non-overlapping grids of size $b \times b$, where $G_R = \left\lceil \frac{M}{b} \right\rceil$ and $G_C = \left\lceil \frac{N}{b} \right\rceil$. Each grid $G_{r,c}$ is processed independently by computing its mean, adding Laplace noise $\eta_{r,c}$, and clipping the noisy mean:
	\begin{equation}
		\mu_{G_{r,c}}' = \text{clip}\left(\frac{1}{b^2} \sum_{(i,j) \in G_{r,c}} \mathbf{I}(i,j) + \eta_{r,c}, 0, 255\right)
	\end{equation}
Since each grid \( G_{r,c} \) is processed using only its own pixels, their computations are independent and can be parallelized across multiple processing units without inter-unit communication. Given $P$ processing units, the $G_R \times G_C$ grids can be evenly distributed, with each unit handling approximately $ \frac{G_R \times G_C}{P} $ grids.
\end{proof}

\subsection{Algorithm Description}
Algorithm~\ref{alg:parallel_dp_image_pixelization} processes an input image $\mathbf{I}$ of size $M \times N$ by dividing it into $b \times b$ grids, achieving pixelation with differential privacy. The sensitivity $\Delta = \frac{255 \times m}{b^2}$ is calculated, and the noise scale $\sigma = \frac{\Delta}{\epsilon}$ is determined. The image is padded to $\mathbf{I}_{\text{padded}}$ of size $(G_R \times b,G_C \times b)$ by mirroring the border pixels, and then reshaped into $\mathbf{I}_{\text{grid}}$ of shape $(G_R, b, G_C, b)$. The algorithm computes the means $\mu_G$ for each grid, and generates noisy means $\mu_G^\prime = \mu_G + \text{Laplace}(0, \sigma)$ for each grid. The $\mu_G^\prime$ is clipped to the range $[0, 255]$ and repeated to restore to the original image size, with the final restored and clipped image $\mathbf{I'}$ being a pixelated version of $\mathbf{I}$ with added noise to ensure differential privacy. 
\begin{theorem}
	Algorithm~\ref{alg:parallel_dp_image_pixelization} preserves $\epsilon$-differential privacy.
\end{theorem}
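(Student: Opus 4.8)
The plan is to reduce the privacy claim for Algorithm~\ref{alg:parallel_dp_image_pixelization} to the already-established guarantee of Theorem~\ref{thm:algorithm_privacy} by showing that the parallel algorithm computes, as a randomized function, exactly the same output distribution as the original mechanism $\tilde{\mathbf{P}}_b$ (up to the deterministic post-processing of mirror-padding and pixel-replication). First I would observe that the only genuinely random step in Algorithm~\ref{alg:parallel_dp_image_pixelization} is the draw of independent Laplace$(0,\sigma)$ variables $\eta_{r,c}$, one per grid, with $\sigma=\tfrac{255m}{b^2\epsilon}$. Everything else — padding $\mathbf{I}$ to $\mathbf{I}_{\text{padded}}$, reshaping into $\mathbf{I}_{\text{grid}}$, computing the per-grid means $\mu_G$, clipping to $[0,255]$, and repeating each noisy mean over its $b\times b$ block — is a deterministic map applied either before the noise (and before the data even enters the mechanism, in the case of padding) or after it.

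Next I would argue the sensitivity is unchanged. The key subtlety is the mirror-padding: I would note that padding is a fixed deterministic preprocessing that does not depend on which of two neighboring images $\mathbf{I}_1,\mathbf{I}_2$ we started from in any way that increases the number of differing pixels — mirroring a border pixel can at most duplicate a changed pixel into the padded margin, but since the padded region is discarded by the final crop/reshape back to $M\times N$ and, more importantly, since each padded entry equals some original entry, the vector of grid means $\mathbf{P}_b(\mathbf{I}_{\text{padded}})$ still differs in at most... here I must be slightly careful: a single changed border pixel could be mirrored into several padded cells. The clean way is to instead bound sensitivity directly on $\mathbf{P}_b$ as in~(\ref{eqn:...}) — but since the paper's convention is $m$ \emph{changed pixels of the padded image}, or equivalently to treat the padded image as the object of the neighboring relation, I would state that under the same $m$-neighborhood assumption the global sensitivity is $\Delta\mathbf{P}_b=\tfrac{255m}{b^2}$ exactly as before, so $\sigma=\Delta\mathbf{P}_b/\epsilon$ is the correct Laplace scale. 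Then the per-grid means form the vector $\mathbf{P}_b(\mathbf{I}_{\text{padded}})$, and adding $\eta=\{\eta_{r,c}\}$ gives precisely the Laplace mechanism on that vector; by the standard Laplace-mechanism argument (already invoked in Theorem~\ref{thm:algorithm_privacy}) this satisfies $\epsilon$-differential privacy.

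I would then close the argument by invoking the post-processing invariance of differential privacy: clipping to $[0,255]$ and tiling each noisy grid mean back to a $b\times b$ block (and cropping to the original $M\times N$) are data-independent deterministic operations, so the final image $\mathbf{I}'$ is a post-processing of an $\epsilon$-DP output and hence itself $\epsilon$-DP. Optionally, I would remark that since the grids are disjoint and each $\eta_{r,c}$ is independent, one can alternatively view the result as an instance of the Parallel Composition Theorem (Theorem~\ref{thm:parallel_composition}): each grid is a disjoint subset of the image domain, each is sanitized by an $\epsilon$-DP mechanism, so the whole is $\epsilon$-DP — this gives a second, composition-based justification that parallels the paper's stated tools.

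The main obstacle I anticipate is making the sensitivity/padding interaction rigorous: one must either (i) define the neighboring relation on the \emph{padded} images so that $\Delta\mathbf{P}_b=255m/b^2$ transfers verbatim, or (ii) show that mirror-padding is a $1$-Lipschitz (in Hamming/$\ell_1$ sense) preprocessing with respect to the original $m$-pixel neighborhood, which is false in general because one border pixel can land in multiple mirrored positions — so the honest fix is option (i), treating padding as part of the problem setup rather than part of the mechanism. Everything else (independence of the noise, identity of the output distribution with $\tilde{\mathbf{P}}_b$, post-processing closure) is routine and follows the template of Theorem~\ref{thm:algorithm_privacy}.
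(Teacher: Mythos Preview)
Your proposal is correct, but your primary route differs from the paper's. The paper's proof invokes the Parallel Composition Theorem (Theorem~\ref{thm:parallel_composition}) directly: the grids $G_{r,c}$ are disjoint subsets of the image domain, each is sanitized by an $\epsilon$-DP Laplace mechanism, and parallel composition gives $\epsilon$-DP overall --- essentially a one-line appeal to Theorem~\ref{thm:parallel_composition}. You instead reduce to Theorem~\ref{thm:algorithm_privacy} by arguing that Algorithm~\ref{alg:parallel_dp_image_pixelization} realises exactly the same randomized map as $\tilde{\mathbf{P}}_b$ (the noisy pixelization vector), followed by the deterministic post-processing of clipping, tiling, and cropping; post-processing invariance then transfers the $\epsilon$-DP guarantee. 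You do mention parallel composition as an optional second justification, which is precisely the paper's argument, so the two approaches overlap there.

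What your approach buys: a cleaner conceptual story (the parallel algorithm is \emph{literally} the original mechanism, just executed in a different order) and no need to separately argue that each per-grid mechanism is itself $\epsilon$-DP under the $m$-neighborhood relation. You also surface a genuine subtlety that the paper's proof simply ignores --- mirror-padding can replicate a changed border pixel into several padded positions, so the sensitivity bound $255m/b^2$ does not transfer automatically to $\mathbf{P}_b(\mathbf{I}_{\text{padded}})$. Your resolution (define the $m$-neighborhood on the padded image, i.e.\ treat padding as part of the problem setup) is the honest fix; the paper's proof makes no mention of padding at all and implicitly assumes the grids partition the original domain, so in this respect your argument is more careful than the published one.
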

\begin{proof}
	The image $\mathbf{I}$ is divided into $G_R \times G_C$ disjoint grids, defined as:
	\begin{equation}
		G = \{G_{r,c} \mid r = 1, \dots, G_R,\; c = 1, \dots, G_C\}
	\end{equation}
	where each $G_{r,c} \subseteq \mathbf{I}$ and $G_{r,c} \cap G_{r',c'} = \emptyset$ for $r \neq r'$ or $c \neq c'$. Since each grid is independent, the Laplace mechanism is applied separately to each $G_{r,c}$. The grids are distributed across $P$ processing units, each handling a subset $\mathcal{S}_p \subseteq G$. By Theorem~\ref{thm:parallel_composition}, applying differentially private mechanisms to disjoint subsets yields a total privacy loss bounded by $\epsilon$, not $P \times \epsilon$. Thus, the overall guarantee remains $\epsilon$-differential privacy.
\end{proof}

To ensure $\mu_G$ reflects only original pixels, we use mirror padding to minimize boundary artifacts. Unlike zero or constant padding, which introduces sharp discontinuities, mirror padding preserves edge continuity and maintains the quality of pixelized images, preventing degradation in downstream tasks such as object detection or segmentation.

The algorithm is parallelized using GPU acceleration with CuPy for efficient processing. Key steps include: (1) Vectorizing padding, reshaping, and mean calculation operations for parallel processing across GPU cores, (2) Simultaneously generating and adding noise for all grids, (3) Restoring the image by repeating the noisy means over the grids, also in parallel. These steps distribute the computational load across GPU resources, significantly improving performance compared to serial processing.

\begin{algorithm}[t]
	\caption{Parallel Differentially Private Image Pixelization}
	\label{alg:parallel_dp_image_pixelization} 
	\begin{algorithmic}[1]
		\REQUIRE Image $\mathbf{I}$ of size $M \times N$, grid size $b$, privacy budget $\epsilon$, max variation $m$
		\ENSURE Restored image $\mathbf{I'}$
		\STATE Calculate sensitivity $\Delta \leftarrow \frac{255 \times m}{b^2}$
		\STATE Calculate noise scale $\sigma \leftarrow \frac{\Delta}{\epsilon}$
		\STATE Calculate grid dimensions $G_R \leftarrow \left\lceil \frac{M}{b} \right\rceil$, $G_C \leftarrow \left\lceil \frac{N}{b} \right\rceil$
		\STATE Pad the image to size $(G_R \times b,G_C \times b)$ by mirroring the border pixels
		\STATE Reshape the padded image $\mathbf{I}_{\text{padded}}$ to \\
		$\mathbf{I}_{\text{grid}} \leftarrow (G_R, b, G_C, b)$
		
		\STATE Compute the means of each grid $\mu_G \leftarrow \text{mean}(\mathbf{I}_{\text{grid}}, (1,3))$
		\STATE Generate Laplacian noise $\eta \leftarrow \text{Laplace}(0, \sigma, (G_R, G_C))$
		\STATE Add noise to grid means $\mu'_G \leftarrow \mu_G + \eta$
		\STATE Clip the noisy means $\mu'_G \leftarrow \text{clip}(\mu'_G, 0, 255)$
		\STATE Restore image by repeating $\mu'_G$ over grids $b \times b$
		\STATE Crop the restored image $\mathbf{I'}$ to original size $M \times N$
		\RETURN $\mathbf{I'}$
	\end{algorithmic}
\end{algorithm}

\subsection{Time and Space Complexity of Algorithm~\ref{alg:parallel_dp_image_pixelization}}
\subsubsection{Time complexity}
Step 1–3 (sensitivity, noise scale, grid size) $\rightarrow$ $O(1)$;  
Step 4 (padding) $\rightarrow$ $O(MN)$;  
Step 5 (reshape) $\rightarrow$ $O(1)$;  
Step 6 (grid means) $\rightarrow$ $O(MN)$;  
Step 7–9 (noise generation, addition, clipping) $\rightarrow$ $O(MN / b^2)$ each;  
Step 10 (repeat to restore image) $\rightarrow$ $O(MN)$;  
Step 11 (crop) $\rightarrow$ $O(MN)$.  
Thus, the overall time complexity is $O(MN)$.
Compared to Algorithm~\ref{alg:original_dp_pixelization}, Algorithm~\ref{alg:parallel_dp_image_pixelization} reduces loop nesting through vectorized and batched operations, achieving faster execution via GPU parallelism, despite having the same theoretical time complexity.
\subsubsection{Space complexity}
The space complexity of the algorithm is dominated by the input image $\mathbf{I}$ and the output image $\mathbf{I'}$, both requiring $O(M  N)$ space. The padded image and reshaped grids also contribute $O(M  N)$ space, as the padding and reshaping operations do not significantly alter the overall size. The grid means $\mu_G$ and noise $\eta$ occupy $O(MN / b^2)$ space, which is relatively small compared to the storage for $\mathbf{I}$ and $\mathbf{I'}$. Therefore, the overall space complexity remains $O(M  N)$, but the efficient GPU memory access making it more suitable for large-scale image processing.

\subsection{Efficient Storage of the Pixelated Image by structured representation}
\label{Efficient_Storage_Alg_2}
For large-scale image processing scenarios, instead of the entire pixelated image $\mathbf{I'}$, we propose storing only $\mu'_G$, grid size $b$, and the image size ($M \times N$), as summarized in Table~\ref{tab:stored_quantities_1}. This can drastically compress the storage space. 
\begin{table}[ht]
	\centering
	\caption{Essential Stored Metadata in Algorithm~\ref{alg:parallel_dp_image_pixelization} for pixelated Image Reconstruction}
	\begin{tabular}{|c|p{3.2cm}|p{3.2cm}|}
		\hline
		\textbf{Item} & \textbf{Description} & \textbf{Storage Space} \\
		\hline
		$\mu'_G$ & Noisy means of each $b \times b$ grid& proportional to the number of grids, which is $O(MN / b^2)$ \\
		\hline
		$b$ & Grid size& Constant, i.e., $O(1)$ \\
		\hline
		$M \times N$ & Original image size, required for padding and cropping & Constant, i.e., $O(1)$ \\
		\hline
	\end{tabular}
	\label{tab:stored_quantities_1}
\end{table}

Each noisy mean is a floating-point value, so the space complexity for storing the noisy means is $O(G_R \times G_C) = O(MN / b^2)$. Storing the original image dimensions $M \times N$ and grid size $b$ requires constant space, i.e., $O(1)$. Thus, the total space complexity for storing these metadata is $O(MN / b^2) + O(1) = O(MN / b^2)$. Compared to storing the full image, which requires $O(MN)$ space, this method greatly reduces the space complexity. This is especially beneficial for large-scale image processing.

\section{Region-Adaptive Parallel Differentially Private Image Pixelization}
\subsection{Adaptive Grid Sizes}
In standard differentially private pixelization, the image $\mathbf{I}$ is uniformly divided into $b \times b$ grids. However, this uniform treatment may be suboptimal for regions with varying complexity. We address this by adaptively adjusting grid size $b$ based on local image characteristics. Specifically, the image $\mathbf{I}$ is segmented into complex regions (e.g., regions of interest) and simple regions (e.g., backgrounds) prior to pixelization. 
A predefined mask $\mathbf{M}$ labels each region type. This mask is obtained using edge-detection or segmentation models (e.g., Sobel operator or pretrained human-segmentation networks) to highlight complex regions characterized by high variance.
Simple regions use larger grids $b$, with sensitivity $\Delta = \frac{255 \times m}{b^2}$ and noise scale $\sigma = \frac{\Delta}{\epsilon}$. Complex regions use smaller subgrids $s_b$, with sensitivity $\Delta_s = \frac{255 \times m}{s_b^2}$ and noise scale $\sigma_s = \frac{\Delta_s}{\epsilon}$, ensuring consistent $\epsilon$-differential privacy. Each grid’s mean is computed, noise is added, and value is clipped to [0, 255]. The results of all grids are reshaped to form the final image $\mathbf{I'}$. By adjusting grid sizes, this adaptive approach applies different pixelization granularities to complex and simple regions, preserving important structural features and maintaining utility in each region, thereby balancing privacy, utility, and efficiency.

\begin{theorem}
	Let $\mathbf{I}$ be an image of size $M \times N$, divided into $b \times b$ grids and partitioned into simple and complex regions by a mask $\mathbf{M}$. For simple regions, the sensitivity is $\Delta = \frac{255 \times m}{b^2}$; for complex regions, grids are further divided into subgrids of size $s_b$, with sensitivity $\Delta_s = \frac{255 \times m}{s_b^2}$. The entire process satisfies $\epsilon$-differential privacy.
\end{theorem}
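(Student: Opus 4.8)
The plan is to reduce this to two ingredients already established in the excerpt: the correctness of the Laplace mechanism applied cell-by-cell, as in Theorem~\ref{thm:algorithm_privacy}, and the Parallel Composition Theorem~\ref{thm:parallel_composition}. The overall mechanism outputs a collection of noisy cell means (simple-grid means and complex-subgrid means), and everything else used to form $\mathbf{I'}$ — clipping to $[0,255]$, reshaping, and repeating values over cells — is deterministic post-processing that cannot weaken the guarantee.

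First I would fix the mask $\mathbf{M}$ as public, data-independent side information, so that the induced partition of the pixel domain $\{1,\dots,M\}\times\{1,\dots,N\}$ does not itself depend on $\mathbf{I}$. Under that assumption I would describe the partition explicitly: the pixels of $\mathbf{I}$ split into (i) the $b\times b$ grids $G_{r,c}$ whose region label is ``simple,'' and (ii) for every ``complex'' grid, the $n^2$ subgrids $G_s$ of size $s_b\times s_b$ obtained from the exact tiling $b=n\cdot s_b$. I would then verify these cells are pairwise disjoint and that their union is the whole pixel domain — this is exactly the ``full coverage, no overlap'' property the paper stresses, and it rests on the divisibility condition $b=n\cdot s_b$ together with the disjointness of the original $b\times b$ grids.

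Next, for neighboring images $\mathbf{I}_1,\mathbf{I}_2$ differing in at most $m$ pixels, I would bound the change of a simple grid's mean by $\frac{255m}{b^2}=\Delta$ and the change of a complex subgrid's mean by $\frac{255m}{s_b^2}=\Delta_s$; both bounds are conservative, since at most $m$ pixels can fall inside any single cell and each pixel value lies in $[0,255]$. Adding $\mathrm{Laplace}(0,\Delta/\epsilon)$ noise to each simple-grid mean and $\mathrm{Laplace}(0,\Delta_s/\epsilon)$ noise to each complex-subgrid mean therefore makes each per-cell release $\epsilon$-differentially private by the standard Laplace argument. Finally, since the cells form a disjoint partition of the data domain and each per-cell mechanism is $\epsilon$-DP, Theorem~\ref{thm:parallel_composition} yields that the joint release of all noisy means — and hence $\mathbf{I'}$ — is $\epsilon$-DP, independent of how the cells are distributed across GPU processing units.

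The main obstacle I expect is justifying that the partition may be treated as public. If $\mathbf{M}$ is computed from $\mathbf{I}$ itself (e.g., via a Sobel operator or a segmentation network applied to the raw image), then the region boundaries leak information about the input and the clean parallel-composition argument no longer applies without either privatizing the mask or charging additional privacy budget; I would resolve this either by stating data-independence of $\mathbf{M}$ as an explicit hypothesis or by noting this issue is handled separately. A secondary subtlety worth a brief check is that the mirror padding used in the parallel algorithm does not inflate per-cell sensitivity: a changed border pixel also changes its reflected copies, so I would argue those copies either stay within the same boundary cell or are already absorbed into the $m$-pixel budget, so that $\Delta$ and $\Delta_s$ remain valid sensitivity bounds.
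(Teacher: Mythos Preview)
Your proposal is correct and follows essentially the same route as the paper's proof: show each simple grid and each complex subgrid is handled by a Laplace mechanism calibrated to its own sensitivity, then invoke Theorem~\ref{thm:parallel_composition} on the disjoint cell partition. Your version is in fact more careful than the paper's, which omits the post-processing remark, the mask data-independence caveat, and the padding subtlety you flag.
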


\begin{proof}
	Simple and complex regions are disjoint. Laplace noise with scale $\sigma = \frac{255 \times m}{b^2 \epsilon}$ is added to simple grids, and $\sigma_s = \frac{255 \times m}{s_b^2 \epsilon}$ to complex subgrids. As noise is applied independently to disjoint subsets, Theorem~\ref{thm:parallel_composition} ensures the overall process satisfies $\epsilon$-differential privacy.
\end{proof}

\subsection{Interpreting Privacy Strength Under Varying Grid Sizes}
To maintain consistent $\epsilon$-differential privacy across the image, we adopt a uniform privacy setting where both simple and complex regions share the same privacy budget $\epsilon$ and maximum variation $m$. Under this setting, the overall privacy guarantee depends only on $\epsilon$ and $m$; the choice of grid size $b$ and region location does not affect formal privacy. To enforce this, noise must be scaled according to local sensitivity: $\sigma = \Delta / \epsilon$. For simple regions, $\Delta = \frac{255 \times m}{b^2}$; for complex regions subdivided into $s_b \times s_b$ grids, $\Delta_s = \frac{255 \times m}{s_b^2}$. Smaller grids yield higher sensitivity, requiring larger noise to preserve consistent privacy. Using the same noise scale $\sigma_{\text{same}}$ would result in different privacy levels: $\epsilon_{\text{complex}} = \frac{\Delta_s}{\sigma_{\text{same}}} > \epsilon_{\text{simple}} = \frac{\Delta}{\sigma_{\text{same}}}$, thus violating uniform $\epsilon$-differential privacy.

\subsection{Grid Size Alignment for Seamless Partitioning}

To partition the entire image \( \mathbf{I} \) seamlessly using different grid sizes for simple and complex regions, the grid size for simple regions \( b \) must be an integer multiple of the grid size for complex regions \( s_b \):
\begin{equation}
	b = n \times s_b,\quad 1 \leq n \leq b, \quad n \in \mathbb{Z}^+
\end{equation}
Consequently, the sensitivity in simple regions is proportional to that in complex regions:
\begin{equation}
	\Delta = \frac{255 \times m}{b^2} = \frac{255 \times m}{(n \times s_b)^2} = \frac{\Delta_s}{n^2}
\end{equation}

This formulation ensures consistent sensitivity and noise scaling across regions. By aligning grid sizes, the image can be uniformly divided without partial grids, avoiding complications in pixelization and privacy enforcement. The entire image is fully covered by partitioning it into simple and complex regions without gaps or redundancy. Each $b \times b$ grid is assigned to one type: simple regions (e.g., backgrounds) are directly averaged with noise added, while complex regions (e.g., faces) are further subdivided into smaller subgrids ($s_b = \frac{b}{n}$) for fine-grained processing. 

\subsection{Parallel Differentially Private Image Pixelization with Subgrid Processing}
\begin{algorithm}[t]
	\caption{Parallel Differentially Private Image Pixelization with Subgrid Processing}
	\label{alg:dp_pixelization_subgrid}
	\begin{algorithmic}[1]
		\REQUIRE Image $\mathbf{I}$ of size $M \times N$, grid size $b$, privacy budget $\epsilon$, max variation $m$, simple region mask $\mathbf{M}$ of size $M \times N$, subgrid division factor $n$
		\ENSURE Restored image $\mathbf{I'}$
		\STATE Calculate grid dimensions $G_R \leftarrow \left\lceil \frac{M}{b} \right\rceil$, $G_C \leftarrow \left\lceil \frac{N}{b} \right\rceil$
		\STATE Calculate padding size $P_R \leftarrow G_R \times b - M$, $P_C \leftarrow G_C \times b - N$
		\STATE Pad the image $\mathbf{I}$ and mask $\mathbf{M}$ to size $(M+P_R,N+P_C)$ by mirroring the border pixels
		\STATE Reshape and transpose the padded image $\mathbf{I}_{\text{padded}}$ and mask $\mathbf{M}_{\text{padded}}$ into grids of size $b \times b$: \\
		$\mathbf{I}_{\text{grid}} \leftarrow \text{Reshape}(\mathbf{I}_{\text{padded}}, (G_R, b, G_C, b))$\\
		$\mathbf{I}_{\text{grid}} \leftarrow \text{Transpose}(\mathbf{I}_{\text{grid}}, (0, 2, 1, 3))$\\
		$\mathbf{M}_{\text{grid}} \leftarrow \text{Reshape}(\mathbf{M}_{\text{padded}}, (G_R, b, G_C, b))$\\
		$\mathbf{M}_{\text{grid}} \leftarrow \text{Transpose}(\mathbf{M}_{\text{grid}}, (0, 2, 1, 3))$
		\STATE Compute mask means: $\mu_{\text{mask}} \leftarrow \text{mean}(\mathbf{M}_{\text{grid}},(2, 3))$
		\STATE Identify simple and complex regions:\\
		$\mu_{\text{simple}} \leftarrow \mu_{\text{mask}} > 0.5$\\
		$\mu_{\text{complex}} \leftarrow \mu_{\text{mask}} \leq 0.5$
		\STATE \textbf{Processing in simple regions:}
		\STATE Calculate grid means $\mu_G \leftarrow \text{mean}(\mathbf{I}_{\text{grid}}[\mu_{\text{simple}}], (1,2))$
		\STATE Add noise $\eta \sim \text{Laplace}(0, \sigma)$, where $\sigma = \frac{\Delta}{\epsilon}$, $\Delta = \frac{255 \times m}{b^2}$
		\STATE Clip noisy means $\mu'_G \leftarrow \text{clip}(\mu_G + \eta, 0, 255)$
		\STATE Assign noisy means to simple regions: \\
		$\mathbf{I}_{\text{grid}}[\mu_{\text{simple}}] \leftarrow \mu'_G$, where $\mu'_G$ is expanded to match the grid dimensions.
		\STATE \textbf{Processing in complex regions:}
		\STATE Divide each grid into $n \times n$ subgrids of size $s_b = \frac{b}{n}$ by reshaping the complex regions:\\
		$G_s\leftarrow \text{Reshape}(\mathbf{I}_{\text{grid}}[\mu_{\text{complex}}], (-1, n, s_b, n, s_b))$
		\STATE Calculate subgrid means $\mu_{G_s} \leftarrow \text{mean}(G_s, (2,4))$
		\STATE Set sub variation $s_m \leftarrow m$, sub epsilon $s_\epsilon \leftarrow \epsilon$
		\STATE Add noise $\eta_s \sim \text{Laplace}(0, \sigma_s)$, where $\sigma_s = \frac{\Delta_s}{s_\epsilon}$, $\Delta_s = \frac{255 \times s_m}{s_b^2}$
		\STATE Clip noisy subgrid means:\\
		$\mu'_{G_s} \leftarrow \text{clip}(\mu_{G_s} + \eta_s, 0, 255)$
		\STATE Expand $\mu'_{G_s}$ to full subgrid shape: $G_s \leftarrow \mu'_{G_s}$
		\STATE Assign subgrids back to complex region grids:\\ $\mathbf{I}_{\text{grid}}[\mu_{\text{complex}}] \leftarrow \text{Reshape}(G_s, (-1, b, b))$
		\STATE Reshape and transpose $\mathbf{I}_{\text{grid}}$ back to $\mathbf{I'}$
		\STATE Crop $\mathbf{I'}$ to original size $M \times N$
		\RETURN $\mathbf{I'}$
	\end{algorithmic}
\end{algorithm}

Algorithm~\ref{alg:dp_pixelization_subgrid} implements the proposed differentially private pixelization method with adaptive grid size. It applies differentially private pixelization to an image $\mathbf{I}$ of size $M \times N$ by dividing the image into grids of size $b \times b$ and $s_b \times s_b$. The image $\mathbf{I}$ and simple region mask $\mathbf{M}$ are padded to dimensions $M + P_R$ by $N + P_C$ to ensure that they are divisible by $b$. The padded image $\mathbf{I}_{\text{padded}}$ and mask $\mathbf{M}_{\text{padded}}$ are reshaped into grids $\mathbf{I}_{\text{grid}}$ and $\mathbf{M}_{\text{grid}}$. The mask means $\mu_{\text{mask}}$ are computed to identify simple regions $\mu_{\text{simple}}$ and complex regions $\mu_{\text{complex}}$. For simple regions, grid means $\mu_G$ are calculated, Laplacian noise $\eta$ is added, and the noisy means $\mu'_G$ are clipped and assigned back to $\mathbf{I}_{\text{grid}}$. For complex regions, each grid is divided into $n \times n$ subgrids, where subgrid means $\mu_{G_s}$ are computed, noise $\eta_s$ is added, and the noisy subgrid means $\mu'_{G_s}$ are clipped and assigned back to the main grid $\mathbf{I}_{\text{grid}}$. The processed $\mathbf{I}_{\text{grid}}$ is then reshaped and transposed to reconstruct $\mathbf{I'}$, which is cropped to the original dimensions $M \times N$, resulting in a differentially private pixelated image.
\subsection{Time and Space Complexity of Algorithm~\ref{alg:dp_pixelization_subgrid}} 
\subsubsection{Time Complexity}
\begin{itemize}
\item{Global operations (all regions):}
Step 1–3 (Grid dimensions, padding, mirroring) $\rightarrow$ $O(MN)$;
Step 4–6 (Reshaping, transposing, and mask means) $\rightarrow$ $O(MN)$;
Step 20-21 (Reshaping, transposing, and cropping) $\rightarrow$ $O(MN)$.
\item{Simple regions:}
Step 8 (Calculating grid means) $\rightarrow$ $O(MN)$;
Step 9-10 (Noise and clipping) $\rightarrow$ $O(MN / b^2)$;
Step 11 (Assigning noisy means) $\rightarrow$ $O(MN)$;
\item{Complex regions:}
Step 13 (Dividing into subgrids) $\rightarrow$ $O(1)$;
Step 14 (Calculating subgrid means) $\rightarrow$ $O(MN)$;
Step 15-17 (Noise and clipping) $\rightarrow$ $O(MN \cdot n^2 / b^2)$;
Step 18-19 (Assigning noisy means) $\rightarrow$ $O(MN)$.
\end{itemize}

Total time complexity is linear:

$O\left(MN \left[1 + 1 / b^2 + n^2 / b^2 \right] \right)\Rightarrow O\left(MN \left[1 + n^2 / b^2 \right] \right)$

\subsubsection{Space Complexity}
\begin{itemize}
\item{Global storage:}
Input, padded image/mask $\rightarrow$ $O(MN)$;
Grid representations and output $\rightarrow$ $O(MN)$;
\item{Simple regions:}
Noisy grid means $\rightarrow$ $O(MN / b^2)$;
\item{Complex regions:}
Subgrids $\rightarrow$ $O(MN)$;
Noisy subgrid means $\rightarrow$ $O(MN \cdot n^2 / b^2)$;
\end{itemize}

Total space complexity is linear:

$O\left(MN \left[1 + 1 / b^2 + n^2 / b^2 \right] \right)\Rightarrow O\left(MN \left[1 + n^2 / b^2 \right] \right)$

\subsection{Efficient Storage of the Pixelated Image by structured representation}
\label{subsec:Efficient_Storage_subgrids}
For downstream tasks, we propose storing only the noisy means ($\mu'_G$, $\mu'_{G_s}$), the region mask means ($\mu_{\text{mask}}$), the grid sizes ($b$, $s_b$), and the original image size ($M \times N$), as summarized in Table~\ref{tab:stored_quantities_2}. These metadata are sufficient to reconstruct $\mathbf{I'}$. 
\begin{table}[ht]
	\centering
	\caption{Essential Stored Metadata in Algorithm~\ref{alg:dp_pixelization_subgrid} for Image Reconstruction}
	\begin{tabular}{|c|p{3.2cm}|p{3.2cm}|}
		\hline
		\textbf{Item} & \textbf{Description} & \textbf{Storage Space} \\
		\hline
		$\mu_{\text{mask}}$ & Mask means for each $b \times b$ grid, used to distinguish between simple and complex regions & Proportional to the number of grids, which is $O(MN / b^2)$ \\
		\hline
		$\mu'_G$ & Noisy means of each $b \times b$ grid for simple regions & proportional to the number of grids, which is $O(MN / b^2)$ \\
		\hline
		$\mu'_{G_s}$ & Noisy means of each $s_b \times s_b$ subgrid in complex regions & proportional to the number of subgrids, which is $O(MN \cdot n^2 / b^2)$ \\
		\hline
		$b$, $s_b$ & Simple and complex grid sizes & Constant, i.e., $O(1)$ \\
		\hline
		$M \times N$ & Original image size, required for padding and cropping & Constant, i.e., $O(1)$ \\
		\hline
	\end{tabular}
	\label{tab:stored_quantities_2}
\end{table}

Therefore, the resulting space complexity is given by:
$O(MN / b^2) + O(MN / b^2) + O(MN \cdot n^2 / b^2) + O(1)
\Rightarrow
O\left(MN \left[1 / b^2 + n^2 / b^2 \right]\right)
$. When $n$ is small (e.g., $n = 1$), complex grids are not heavily subdivided, and the overall space reduces to $O(MN / b^2)$. As $n$ increases, the term $O(MN \cdot n^2 / b^2)$ dominates. In the worst case, $n = b$, and the second term becomes $O(MN)$, resulting in a total space complexity of $O(MN)$, which is equivalent to storing the entire image. In summary, compared to storing the full image $\mathbf{I'}$ with $O(MN)$ space, this method achieves significant savings when $n$ is small, but the space cost grows as $n$ increases, reaching $O(MN)$ when $n$ approaches $b$.

For both Algorithm~\ref{alg:parallel_dp_image_pixelization} and Algorithm~\ref{alg:dp_pixelization_subgrid}, restoring the pixelated image $\mathbf{I'}$ from the stored metadata is computationally lightweight. Each $b \times b$ or $s_b \times s_b$ grid is reconstructed by broadcasting a single scalar (the noisy mean) using efficient array expansion without iteration. Algorithm~\ref{alg:parallel_dp_image_pixelization} tiles $\mu'_G$ and crops to the original size, while Algorithm~\ref{alg:dp_pixelization_subgrid} expands $\mu'_G$ and $\mu'_{G_s}$, reshapes and merges them into grids, and then crops the result. These operations rely solely on reshaping and broadcasting, which can be highly optimized on GPUs.

Our storage optimization is useful in both centralized (e.g., cloud storage) and distributed (e.g., edge devices) settings. Compact representations reduce communication overhead and long-term cost in large-scale deployments, while on edge platforms they lower memory usage and bandwidth demand during buffering or streaming. Moreover, the mechanism supports reversibility, as DP-protected images can be reconstructed from compact statistics. This ensures consistent downstream usability while coupling privacy with deployment efficiency, which conventional anonymization cannot achieve.

\section{Experiments and Results}
\subsection{Evaluating the Effectiveness and Efficiency of Parallel Differentially Private Image Pixelization}
\label{Evaluating_Parallel_DP}

\subsubsection{Experimental Setup and Evaluation Metrics}
To evaluate the effectiveness and efficiency of Algorithm~\ref{alg:parallel_dp_image_pixelization}, we conducted experiments on two grayscale-converted datasets: PETS09-S2L1 (795 frames, 768×576, university campus scenes) and Venice-2 (600 frames, 1920×1080, open square scenes). These datasets differ in resolution and context, providing diverse test conditions.
We focus on three key parameters: grid size $b$, maximum pixel variation $m$, and privacy budget $\epsilon$. Effectiveness is measured by mean squared error (MSE) and structural similarity index (SSIM). Using these metrics for evaluation is inspired by \cite{10.1007/978-3-319-95729-6_10}. Efficiency is evaluated using average and total processing time per dataset, excluding metric computation overhead. Experiments were conducted on an NVIDIA RTX A5500 Laptop GPU (16GB VRAM, 7424 CUDA cores) using Python with \texttt{CuPy}, \texttt{NumPy}, and \texttt{scikit-image}.

\begin{figure}[H]
	\centering
	\includegraphics[width=\columnwidth]{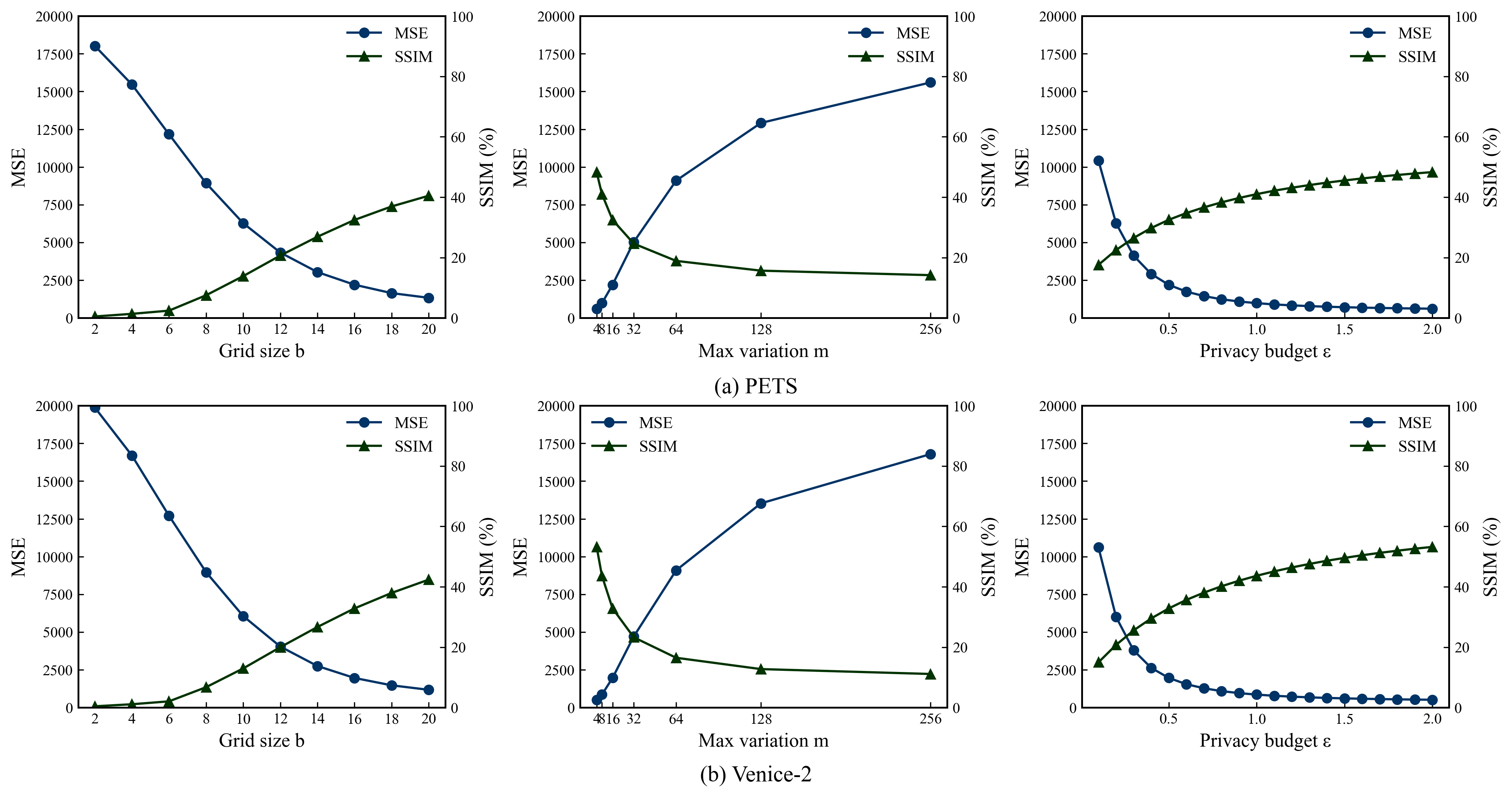}
	\caption{\centering Impact of \( b \), \( m \), and \( \epsilon \) on PETS and Venice-2 datasets: (a) PETS and (b) Venice-2.}
	\label{fig:PETS-Venice-2}
\end{figure}
\subsubsection{Effect of Grid Size \( b \), Maximum Pixel Variation \( m \) and Privacy Budget \( \epsilon \)}
We conducted three experiments: (1) fixing $\epsilon = 0.5$, $m = 16$ and varying grid size $b$ from 2 to 20 to assess utility and efficiency; (2) fixing $b = 16$, $\epsilon = 0.5$ and varying $m$ from 4 to 256 to examine noise effects; and (3) fixing $b = 16$, $m = 16$ while varying $\epsilon$ from 0.1 to 2.0 to evaluate privacy and output quality. Results are shown in Figure~\ref{fig:PETS-Venice-2}. As grid size $b$ increases, fewer partitions lead to lower MSE and higher SSIM, as varying $b$ changes visual granularity and noise per grid. Increasing maximum pixel variation $m$ amplifies noise, degrading SSIM and raising MSE, highlighting the privacy–utility trade-off. A higher privacy budget $\epsilon$ reduces noise, enhancing fidelity while weakening privacy protection.

\begin{figure}[!t]
	\centering
	\includegraphics[width=\columnwidth]{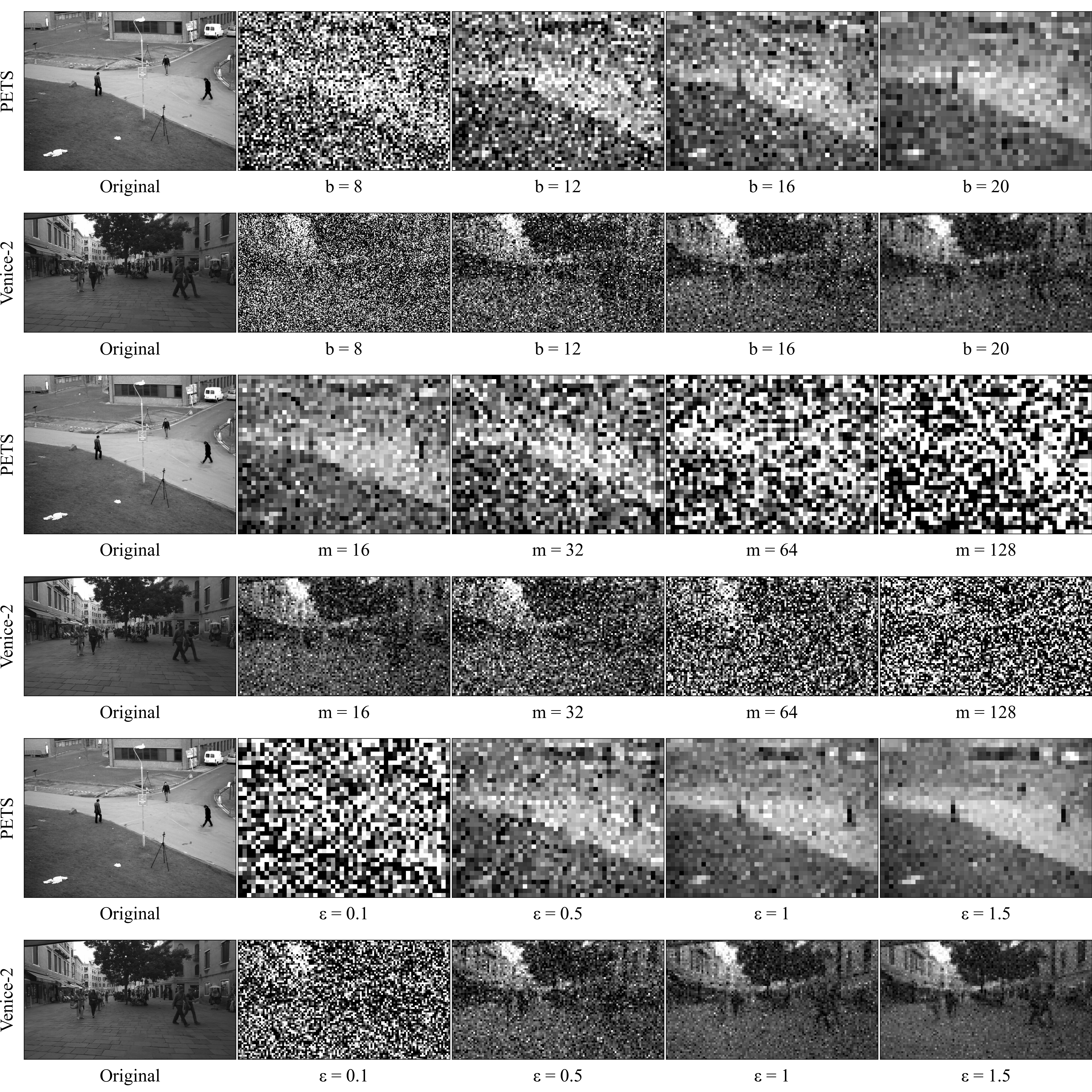}
	\caption{Impact of $b$, $m$, and $\epsilon$ on pixelization performance. Top: $m=16$, $\epsilon=0.5$, $b \in \{8,12,16,20\}$; middle: $b=16$, $\epsilon=0.5$, $m \in \{16,32,64,128\}$; bottom: $b=16$, $m=16$, $\epsilon \in \{0.1,0.5,1,1.5\}$.}
	\label{fig:dp_pixelizationshow}
\end{figure}

Figure~\ref{fig:dp_pixelizationshow} illustrates how $b$, $m$, and $\epsilon$ affect pixelization performance. Varying $b$ changes visual granularity and noise per grid (first two rows), but does not alter the overall privacy guarantee, which is determined solely by $\epsilon$ and $m$. Increasing $m$ adds more noise and reduces fidelity (the middle rows). A higher $\epsilon$ lowers the noise scale, enhancing visual quality while relaxing privacy protection (the last two rows). The results highlight the trade-off between privacy and image quality under various configurations.

\begin{figure}[!t]
	\centering
	\includegraphics[width=0.48\textwidth]{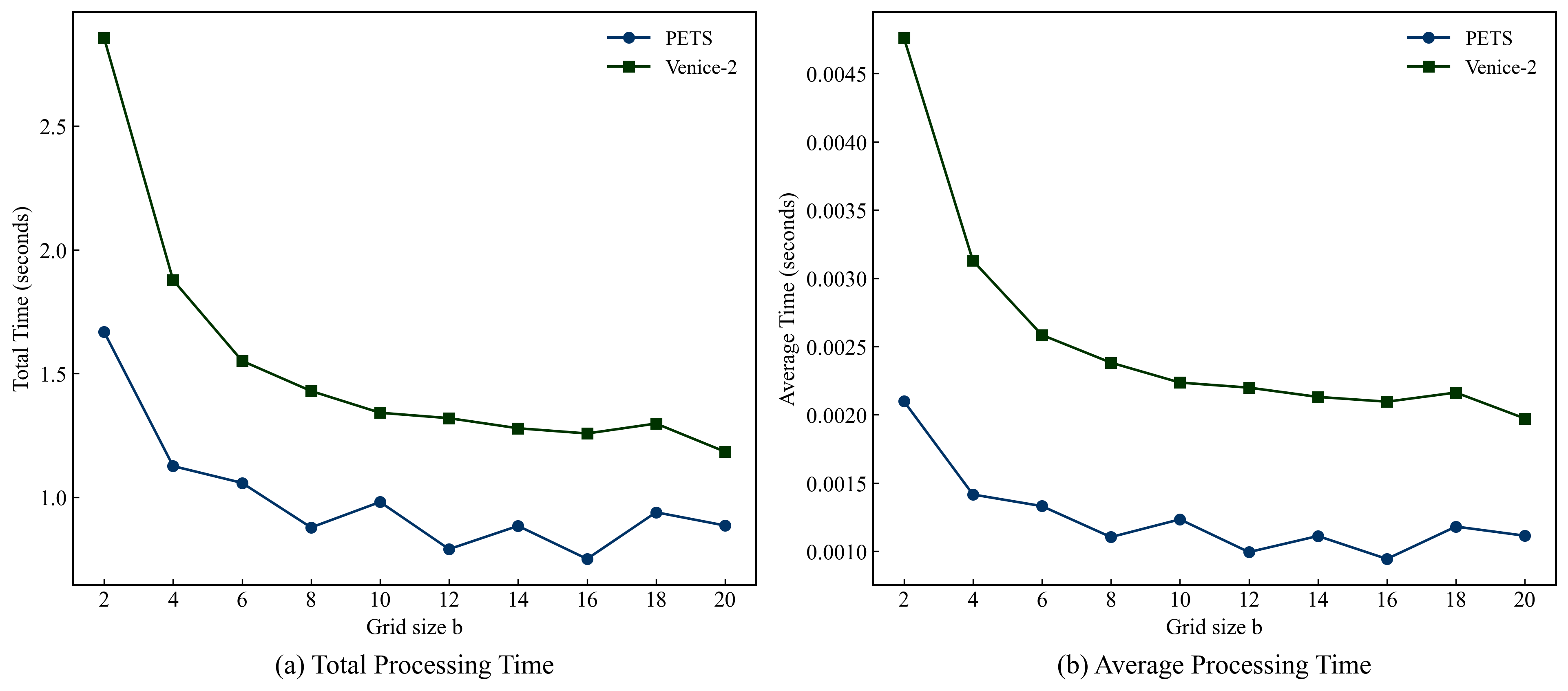}
	\caption{Processing time of Algorithm~\ref{alg:parallel_dp_image_pixelization} versus grid size $b$ on the PETS and Venice-2 datasets. (a) Total processing time. (b) Average processing time.}
	\label{fig:processing_time}
\end{figure}
Fig.~\ref{fig:processing_time} shows how grid size $b$ affects total and average processing time on PETS and Venice-2. In general, larger $b$ reduces computation due to fewer grid operations. However, non-monotonic fluctuations occur when image dimensions are not divisible by $b$, leading to padding overhead (e.g., $b=10,14$ for PETS $768\times576$). For stable runtime and better GPU utilization, we recommend using grid sizes that evenly divide image dimensions, e.g., $b=8,12,16,24$ for PETS and $b=12,24,30,40$ for Venice-2.

To assess runtime efficiency, we compare our method with the classical CPU-based DP pixelization by Fan et al.~\cite{10.1007/978-3-319-95729-6_10}. Their experiments on low-resolution datasets (e.g., AT\&T 92$\times$112, MNIST 28$\times$28) limit applicability to high-resolution scenarios, making them unsuitable for evaluating our GPU-accelerated method. Using fixed $\epsilon=0.5$, $m=16$, and $b=16$, our method achieves 13.3$\times$ and 31.5$\times$ average time speedups on PETS and Venice-2, respectively (Table~\ref{tab:comparison}), reducing total runtime to under 2 seconds. 
\begin{table}[htbp]
	\centering
	\caption{Processing Efficiency Comparison Between Our Method and Fan et al.~\cite{10.1007/978-3-319-95729-6_10}}
	\label{tab:comparison}
	\resizebox{\linewidth}{!}{
		\begin{tabular}{l|c|c|c|c}
			\hline
			Dataset & Method & Avg Time (s) & Total Time (s) & Speedup (Avg Time) \\
			\hline
			PETS & Fan et al. (CPU) & 0.012  & $\sim$9.54  & 1.0$\times$ \\
			PETS & Ours (GPU)       & 0.0009 & 0.7463      & \textbf{13.3$\times$} \\
			Venice-2 & Fan et al. (CPU) & 0.0661  & $\sim$39.66  & 1.0$\times$ \\
			Venice-2 & Ours (GPU)       & 0.0021 & 1.2577      & \textbf{31.5$\times$} \\
			\hline
		\end{tabular}
	}
\end{table}

Our parallel DP pixelization algorithm significantly outperforms the CPU baseline in efficiency. GPU acceleration reduces runtime by over an order of magnitude, enabling real-time processing of high-resolution data. Experiments on PETS and Venice-2 confirm its robustness and scalability.

\subsubsection{Storage Efficiency Evaluation Under Varying Grid Sizes}
\label{Storage_Evaluation_alg2}
To evaluate the practical storage efficiency of our proposed structured representation in Section \ref{Efficient_Storage_Alg_2}, we conducted a full dataset experiment on PETS and Venice-2. For each dataset, we applied Algorithm~\ref{alg:parallel_dp_image_pixelization} using various grid sizes \( b \in \{1, 2, 4, 8, 16, 32, 64, 128\} \). The outputs were saved using two formats in two folders: 
\begin{itemize}
	\item PNG images (\texttt{.png}): Full images $\mathbf{I'}$ saved in PNG format after DP processing.
	\item NPZ files (\texttt{.npz}): Only the noisy means $\mu'_G$ (as \texttt{uint8}) and other metadata (\( b, M, N \)) are stored. The NPZ format (NumPy Zip archive) is a compressed container file that stores multiple NumPy arrays using ZIP compression.
\end{itemize}

Figure~\ref{fig:storage_comparison} presents the total folder sizes for each configuration. As expected, increasing the grid size \( b \) reduces the number of grids and consequently lowers the storage requirement in both formats. However, the reduction is significantly more pronounced for the NPZ format, especially for high-resolution datasets such as Venice-2. When $b=1$, the NPZ format stores one noisy mean per pixel, making compression ineffective. In contrast, PNG benefits from entropy-based compression and can still reduce file size even for pixel-wise outputs, resulting in a smaller overall file. However, at \( b = 4 \), the NPZ-based storage is already about half the size of its PNG counterpart. This gap widens as \( b \) increases; for instance, at \( b = 128 \), the NPZ folder for Venice-2 is only 0.50~MB, whereas the corresponding PNG folder is 2.64~MB—more than five times larger. 
\begin{figure}[H]
	\centering
	\includegraphics[width=\linewidth]{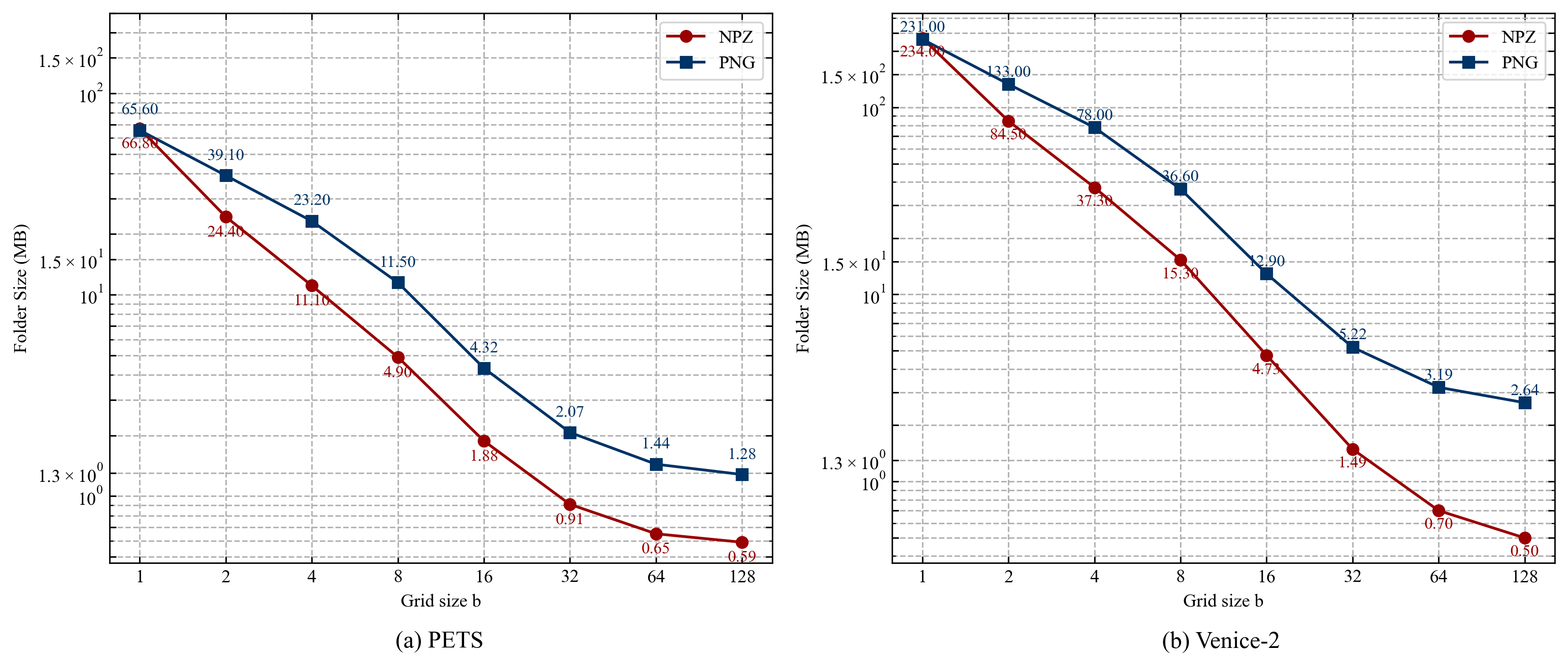}  
	\caption{Storage size comparison of NPZ files and PNG images across different grid sizes $b$. 
		(a) Results on the PETS dataset; (b) Results on the Venice-2 dataset.}
	\label{fig:storage_comparison}
\end{figure}
Although the size of \( \mu'_G \) is approximately reduced to \( 1 / b^2 \) of the full image \( \mathbf{I'} \), the size of the resulting NPZ files does not decrease proportionally compared to PNG images. This is because the PNG format also benefits from powerful entropy-based compression. The highly structured nature of differentially private images makes them particularly compressible in PNG, thereby narrowing the expected storage gap.

These results demonstrate that while PNG benefits from entropy-based compression, it still retains pixel-wise redundancy. In contrast, our NPZ storage captures only the essential anonymized statistics, yielding much better scalability for high-resolution images. Importantly, the NPZ format is not merely a compact representation, it is also fully reversible. The DP images $\mathbf{I'}$ can be faithfully reconstructed from the stored NPZ files.

\subsection{Privacy-Utility Evaluation of Region-Adaptive Parallel Differentially Private Pixelization}
\subsubsection{Fine-Grained Evaluation on Portrait Dataset (PPM-100)}
To assess the privacy–utility trade-off and structural preservation of our region-adaptive Algorithm~\ref{alg:dp_pixelization_subgrid}, we conduct experiments on the PPM-100 dataset. Unlike PETS and Venice-2, which are suitable for evaluating runtime and coarse visual fidelity, PPM-100 contains high-resolution portraits with accurate alpha mattes, enabling detailed analysis of structure-aware privacy. This is essential for evaluating region-adaptive processing based on human silhouettes. Such fine-grained analysis requires precise foreground–background separation, which is unavailable in PETS and Venice-2. Pixel-level annotations in PPM-100 allow objective measurement of segmentation and image fidelity using Intersection-over-Union (IoU), Dice coefficient, MSE, and SSIM, offering insights into how well structural information is preserved under differential privacy. This experiment on the PPM-100 dataset evaluates whether Algorithm~\ref{alg:dp_pixelization_subgrid} preserves image usability after applying differential privacy. We hypothesize that successful segmentation of DP images implies preserved semantic structure, which is essential for downstream tasks such as accident monitoring and activity recognition. Results show that our method maintains structural integrity under privacy constraints, ensuring both functionality and protection.

We use the rembg method (\texttt{u2net\_human\_seg}) to segment human foregrounds from both original and DP images, and compare the results with ground-truth mattes to assess structural preservation after pixelization. 

\subsubsection{Interpreting Structural Fidelity Under Differential Privacy Constraints}
We first fix $\epsilon = 0.5$ and $m = 32$, and systematically vary two parameters: the grid size $b \in \{1, 2, 4, 8, 16, 32, 64, 128\}$ and the subgrid division factor $n \in \{1, 2, 4, 8, \dots, b\}$. Each $b \times b$ grid in the complex region is further split into $n \times n$ subgrids.
Effectiveness is evaluated using four metrics: IoU (DP Image) measures the Intersection-over-Union between the rembg-processed DP image and the matte; Dice (DP Image) reflects the structural alignment using the Dice coefficient; MSE and SSIM measure pixel-level distortion and perceptual similarity relative to the original grayscale images.

\begin{figure}[H]
	\centering
	\includegraphics[width=\columnwidth]{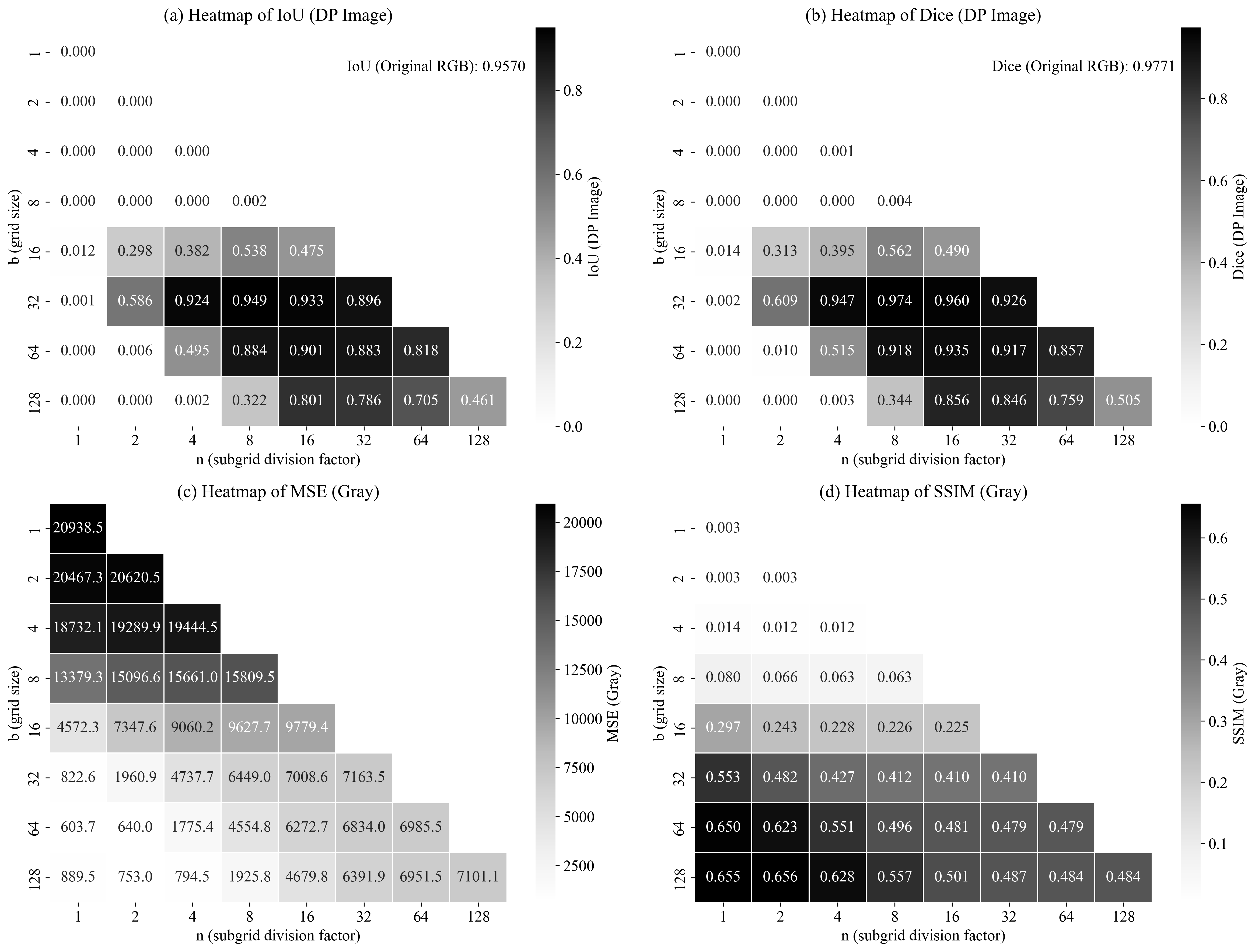}
	\caption{Performance of Algorithm~\ref{alg:dp_pixelization_subgrid}  on PPM-100 dataset across varying \( b \) and \( n \). Metrics include: (a) IoU (DP Image), (b) Dice (DP Image), (c) MSE (Gray), and (d) SSIM (Gray). Original RGB images achieve IoU = 0.9570 and Dice = 0.9771 with the mattes using rembg (indicated in heatmap annotations).}
	\label{fig:heatmap_ppm100}
\end{figure}
The results across all evaluated parameter combinations are visualized in Figure~\ref{fig:heatmap_ppm100}. To provide reference, we also annotate the upper bound performance of original RGB images (without noise) in the heatmaps, which consistently yield IoU = 0.9570 and Dice = 0.9771.
As shown in Figure~\ref{fig:heatmap_ppm100}, when both \( b \) and \( n \) are small (e.g., \( b = 1, 2 \)), the image is excessively fragmented and subjected to strong noise, resulting in high MSE, low SSIM, and nearly zero segmentation performance as IoU and Dice are close to zero. As \( b \) increases to 32 and above, structural fidelity improves significantly. Particularly, for \( b = 32 \), \( n = 8 \), the algorithm achieves IoU = 0.949 and Dice = 0.974, closely matching the original values, while maintaining reasonable distortion (MSE \(= 6449.0 \), SSIM \(= 0.412 \)). This shows that our method can effectively preserve structure while enforcing differential privacy.
However, for overly large \( b \) values (e.g., \( b = 128 \)), although utility metrics such as MSE and SSIM slightly improve due to reduced noise injection per grid, segmentation performance starts to degrade, indicating loss of local structural details crucial to boundary delineation in complex regions.

\textit{Ablation Analysis and Baseline Comparison:}
When $n=1$, the algorithm treats all regions uniformly, equivalent to Algorithm~\ref{alg:parallel_dp_image_pixelization} from the original DP method~\cite{10.1007/978-3-319-95729-6_10}, thus disables region adaptivity and serves as a baseline for comparison. This baseline yields significantly lower IoU and Dice scores (close to 0), highlighting that coarse-grained pixelization alone fails to preserve structural details. In contrast, our region-adaptive method leverages differing granularities for foreground and background regions, resulting in better structural fidelity under DP constraints. The privacy–utility trade-off is further tunable via $b$ and $n$.

\subsubsection{Impact of Privacy Budget $\epsilon$}
We then systematically evaluated the impact of the privacy budget $\epsilon$ on visual fidelity and segmentation performance using the PPM-100 dataset. Parameters were fixed as $b=32$, $n=8$, and $m=32$, while $\epsilon$ varied from $0.1$ to $10.0$. As shown in Figure~\ref{fig:epsilon_analysis}, increasing $\epsilon$ weakens Laplace noise, reducing MSE and increasing SSIM. IoU and Dice improve sharply from $\epsilon=0.1$ to $0.4$, then stabilize and slightly decline beyond $\epsilon=2.0$.
\begin{figure}[H]
	\centering
	\includegraphics[width=\linewidth]{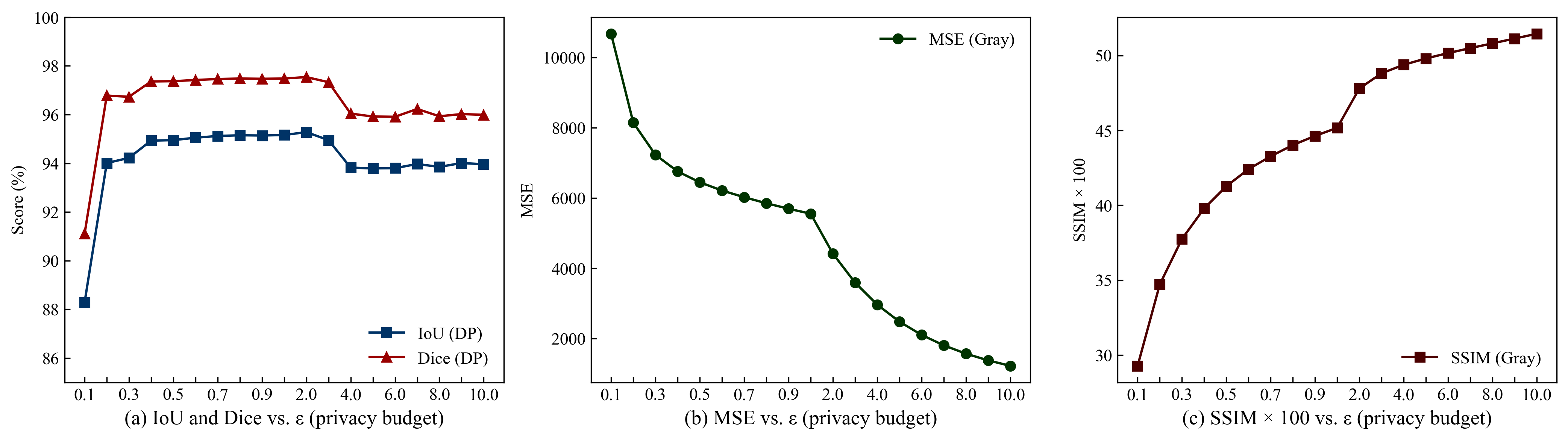}
	\caption{
		Performance of adaptive pixelization on PPM-100 dataset across varying privacy budgets $\epsilon$ with fixed settings ($b=32$, $n=8$, $m=32$).
		(a) IoU and Dice vs. $\epsilon$;
		(b) MSE vs. $\epsilon$;
		(c) SSIM $\times$ 100 vs. $\epsilon$.
	}
	\label{fig:epsilon_analysis}
\end{figure}

This counterintuitive result stems from the model’s sensitivity to distribution shifts. The rembg model (e.g., \texttt{u2net\_human\_seg}) is trained on clean RGB images, whereas pixelization introduces artificial grid artifacts and discontinuities that lead to misclassification along grid edges. Moderate noise levels (e.g., $\epsilon = 1$) help smooth out these artifacts, acting as a regularizer and guiding the model toward meaningful structures. As a result, segmentation peaks at intermediate $\epsilon$ and declines when noise is too high or too low. In privacy-preserving vision, noise serves a dual role: protecting privacy and reducing structural distortion. Properly regularized noise is key to maintaining segmentation performance. For optimal privacy-utility trade-offs, in this experiment, we recommend setting $\epsilon$ within the range $[0.5, 1.0]$. In this region, the segmentation performance is nearly identical to the original images, while privacy guarantees remain meaningful.

\begin{figure}[!t]
	\centering
	\includegraphics[width=\columnwidth]{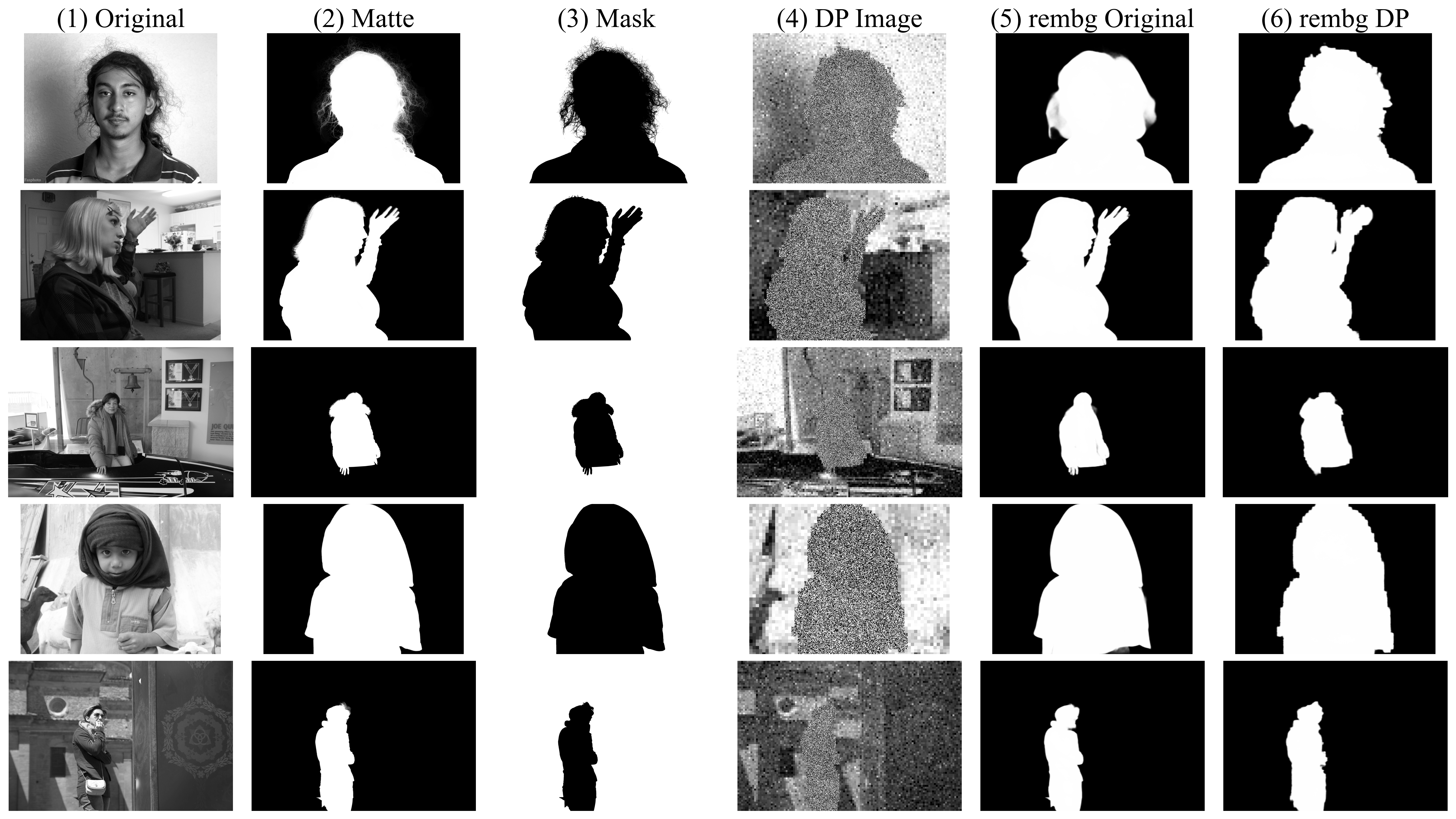}
	\caption{\centering Visualization of region-adaptive differentially private pixelization. Each row corresponds to one image from PPM-100, showing: (1) original grayscale image, (2) alpha matte, (3) binary region mask, (4) pixelized DP image, (5) extracted mask from original image, and (6) extracted mask from DP image.}
	\label{fig:region_adaptive_visualization}
\end{figure}
\subsubsection{Visual Analysis of Structural Preservation Under Differential Privacy}
To intuitively illustrate the effectiveness of our region-adaptive differentially private pixelization algorithm, we conduct a visual analysis on five high-resolution portrait images randomly selected from the PPM-100 dataset. As shown in Figure~\ref{fig:region_adaptive_visualization}, the algorithm divides the image into simple and complex regions using a binary mask derived from the alpha matte: simple background regions are pixelized with coarse grids, while complex foreground regions are refined with subgrid noise injection. The experiment is conducted with fixed parameters: $b=32$, $m=32$, $\epsilon=0.5$, and $n=8$. All operations are accelerated using CuPy on the GPU. For each image, we show six panels: (1) original grayscale image, (2) ground-truth alpha matte, (3) binary mask used for region division, (4) DP pixelized result, (5) foreground mask extracted from the original image using rembg, and (6) mask extracted from the DP image. Comparing the rembg masks from original and DP images reveals how well human contours are preserved after pixelization. A good match in the DP mask indicates that key structural features remain intact despite added noise. Although fine details may degrade, the region-adaptive method retains overall shape. This ensures the DP image remains usable for tasks like detection or segmentation, effectively balancing privacy protection and image utility.

\subsubsection{Storage Efficiency Evaluation Under Varying Subgrid Division Factors}
While the impact of the base grid size \( b \) on storage efficiency has already been analyzed in Section~\ref{Storage_Evaluation_alg2}, we now focus on evaluating the effect of varying the subgrid division factor \( n \). Larger values of \( n \) imply finer granularity in complex regions, resulting in more subgrids and greater storage demand for the corresponding noisy means. For each \( n \in \{1, 2, 4, 8, 16, 32, 64, 128\} \), we applied Algorithm~\ref{alg:dp_pixelization_subgrid} to the full PPM-100 dataset with fixed $b=128$, $\epsilon = 0.5$, and $m=32$. The outputs were saved using two formats in two folders: 
\begin{itemize}
	\item PNG images (\texttt{.png}): Full images $\mathbf{I'}$ saved in PNG format after DP processing.
	\item NPZ files (\texttt{.npz}): Only the simple noisy means $\mu'_G$ (as \texttt{uint8}), complex noisy means $\mu'_{G_s}$ (as \texttt{uint8}), mask means $\mu_{\text{mask}}$ (as \texttt{float32}), and other metadata (\( b, s_b, M, N \)) are stored. To ensure stable image reconstruction, $\mu_{\text{mask}}$ should be stored as \texttt{float32}. Using \texttt{uint8} causes precision loss, which may lead to errors during recovery.
\end{itemize}

Figure~\ref{fig:n_vs_storage} presents the storage cost versus subgrid division factor \( n \). The size of PNG output increases steadily with \( n \), while the NPZ metadata shows a sharper growth trend. This conforms to the space complexity analysis in Section~\ref{subsec:Efficient_Storage_subgrids}. In practice, the subgrid division factor \( n \) should be selected based on the application's storage-utility trade-off. Nevertheless, in all tested settings, our proposed NPZ-based metadata storage remains smaller than the PNG output. Importantly, this metadata-based storage is reversible: the full differentially private image $\mathbf{I'}$ can be faithfully reconstructed from the metadata. This enables more efficient storage, transmission, and reusability of privacy-preserving visual data.

\begin{figure}[t]
	\centering
	\includegraphics[width=0.37\textwidth]{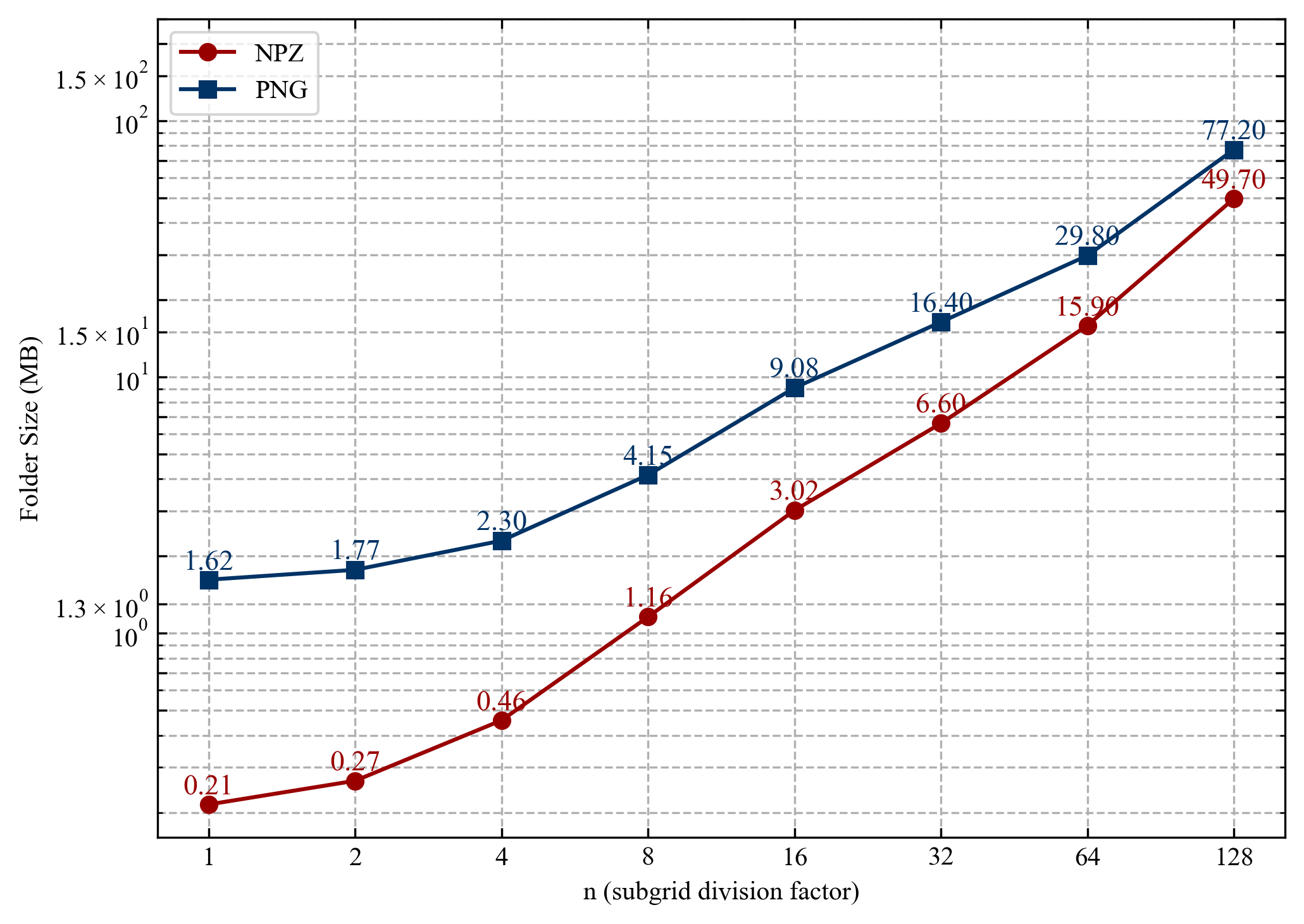}
	\caption{Storage size comparison of NPZ files and PNG images across different subgrid division factor \( n \).}
	\label{fig:n_vs_storage}
\end{figure}

\subsection{Face Re-Identification Attack Evaluation on CelebA via ArcFace}
\subsubsection{Limitations of PPM-100 and Motivation for Using CelebA}
To further evaluate privacy protection, we conducted a face re-identification (Re-ID) experiment using a pre-trained ArcFace model~\cite{8953658} on CelebA dataset. PPM-100 lacks identity labels and is thus used for structure fidelity evaluation, while CelebA provides labeled identities but lacks alpha mattes. Hence, the two datasets serve complementary roles: PPM-100 evaluates visual fidelity; CelebA assesses identity leakage resistance.
ArcFace is chosen for its high accuracy and widespread use in face recognition benchmarks. It generates discriminative embeddings, making it suitable for evaluating identity leakage risk. The evaluation assumes a strong adversary with access to clean gallery images, attempting to identify individuals by comparing embeddings of privacy-protected query images with those of the gallery.

\subsubsection{Experimental Setup}
We selected 100 identities from CelebA, each with two images. For each identity, one image was used as a gallery reference and the other as a query. The gallery images remained unprotected. The query images were first masked using BiSeNet \cite{yu2018bisenet} to distinguish complex (face) and simple (background) regions. We then applied our region-adaptive differentially private pixelization to the queries under various parameter configurations:
\begin{itemize}
	\item Varying grid size \( b \in \{4, 8, 12, 16, 20\} \)($\epsilon=5$, $n=4$)
	\item Subgrid division factor \( n \in \{1, 2, 4, 8, 16\} \)($\epsilon=5$, $b=16$)
	\item Privacy budget \( \epsilon \in \{0.1, 2.5, 5.0, 7.5, 10.0\} \)($b=16$, $n=4$)
\end{itemize}
Fig.~\ref{fig:first_query_image} shows the differentially private result of the first query image as a representative example.
\begin{figure}[H]
	\centering
	\includegraphics[width=\columnwidth]{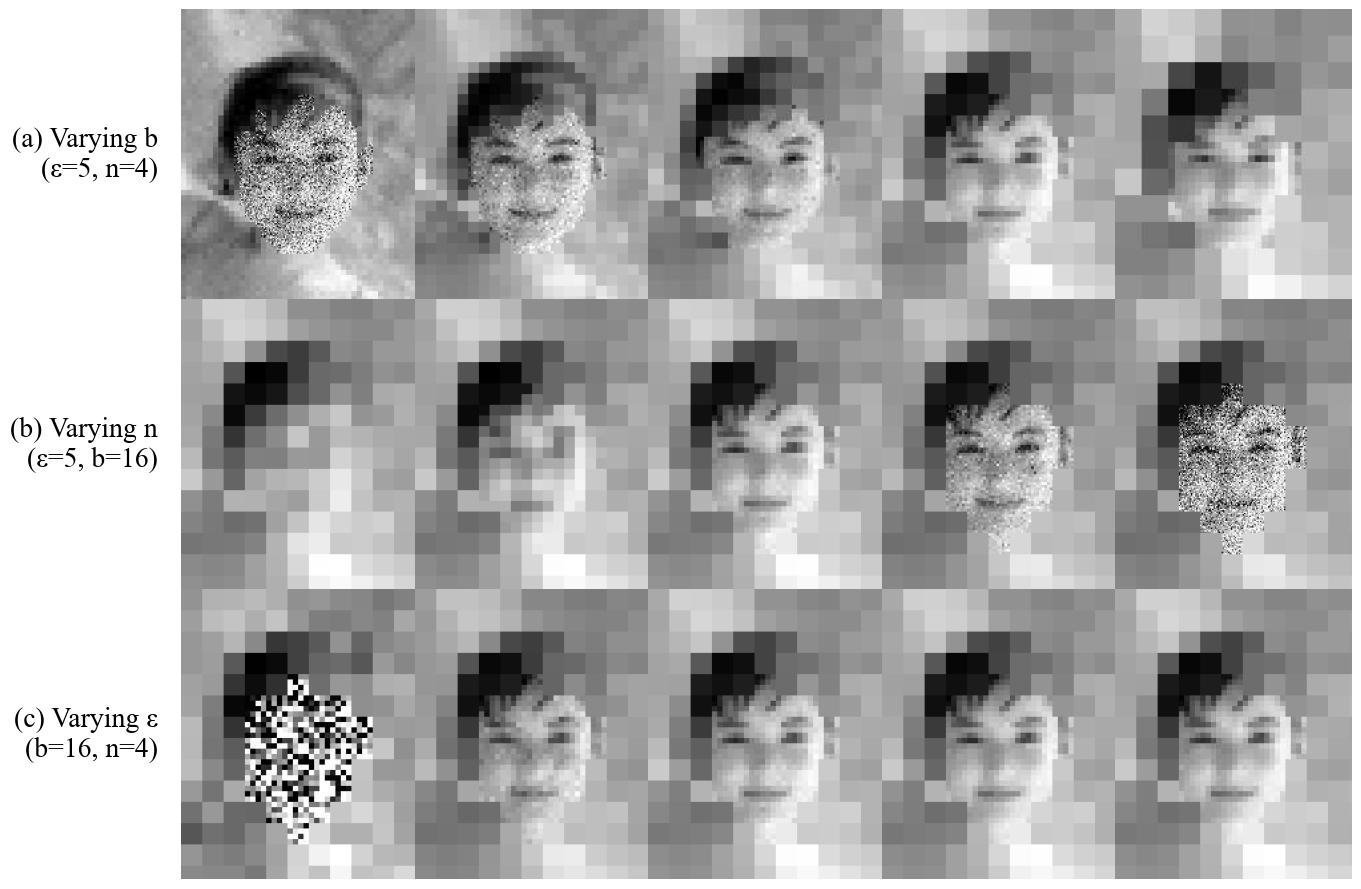}
	\caption{The differentially private result of the first query image under various parameter configurations. (a) Varying grid size \( b \), ($\epsilon=5$, $n=4$); (b) Varying subgrid division factor \( n \), ($\epsilon=5$, $b=16$); (c) Varying privacy budget \( \epsilon \), ($b=16$, $n=4$).}
	\label{fig:first_query_image}
\end{figure}

Features were extracted using ArcFace, and cosine similarity was used to compute Top-1 matching accuracy. To avoid Top-1 accuracy collapse under excessive noise, we selected a large privacy budget range $\epsilon \in \{0.1, 2.5, 5.0, 7.5, 10.0\}$ and a small maximum variation parameter $m=1$, yielding low noise with scale $\frac{255m}{b^2\epsilon}$. This preserves partial recognition and enables meaningful Top-1 accuracy variation across $b$, $n$, and $\epsilon$. In contrast, smaller $\epsilon$ or larger $m$ would overwhelm the image with noise, resulting in uniformly low Top-1 accuracy (close to 1.0\%), which would obscure the influence of pixelization parameters and hinder comparative analysis.

This experiment adopts an attacker's perspective, assuming access to both the gallery and privacy-protected query images to evaluate re-identification risk. Crucially, the attacker has no access to unprotected queries for training, reflecting a realistic threat model where recognition operates solely on perturbed visual content exposed after privacy transformation.

\subsubsection{Results and Analysis}
The results are shown in Fig.~\ref{fig:reid_accuracy}. Each curve illustrates the Top-1 accuracy trend under one parameter setting. We make the following observations:
\begin{itemize}
	\item Grid size $b$: In Fig.~\ref{fig:reid_accuracy} (a), accuracy rises with $b$ up to 12 (57\%), then declines at $b=16$ and $b=20$, showing that overly large grids disrupt local structure. This reveals a non-monotonic utility trend with respect to $b$.
	\item Subgrid division factor $n$: Fig.~\ref{fig:reid_accuracy} (b) shows accuracy peaking at $n=4$ (43\%) but dropping with larger $n$ due to over-fragmentation and noise. It exhibits a clear non-linear effect on recognition accuracy. 
	\item Privacy budget $\epsilon$: As seen in Fig.~\ref{fig:reid_accuracy} (c), accuracy increases with $\epsilon$, from 1\% ($\epsilon=0.1$) to 48\% ($\epsilon=10.0$), suggesting diminishing gains in utility when privacy protection is already weak.
\end{itemize}
\begin{figure}[H]
	\centering
	\includegraphics[width=\linewidth]{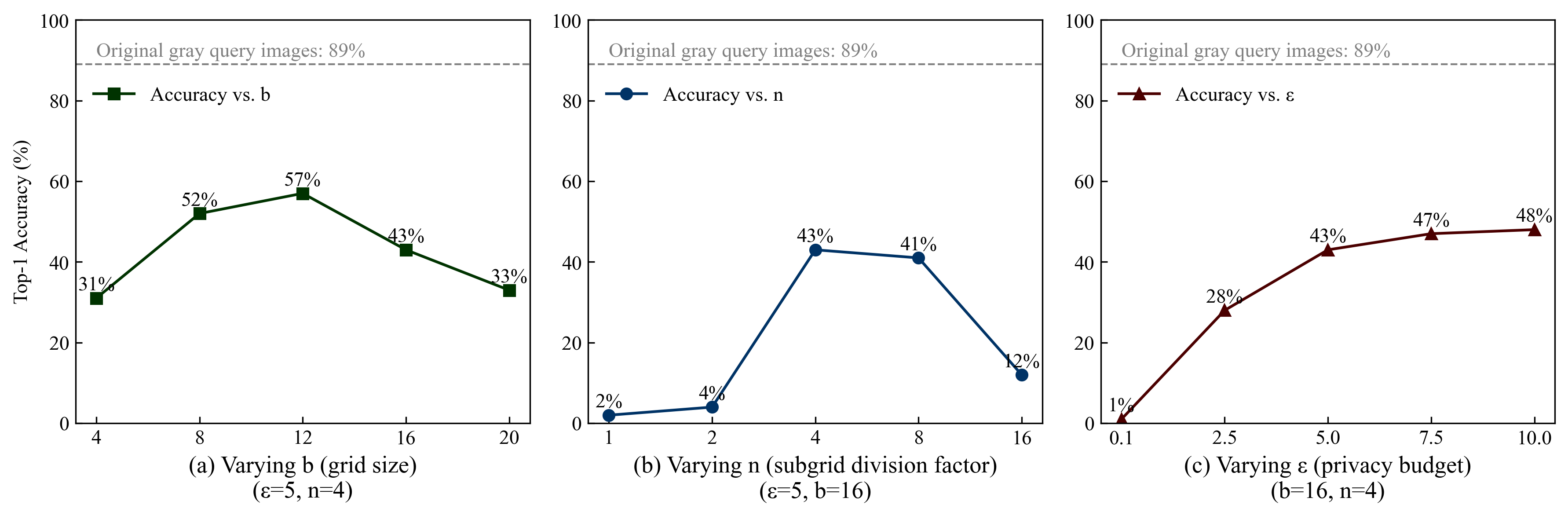}
	\caption{
		Top-1 face re-identification accuracy under varying parameters. 
		(a) $b$ ($\epsilon=5$, $n=4$); 
		(b) $n$ ($\epsilon=5$, $b=16$); 
		(c) $\epsilon$ ($b=16$, $n=4$). 
		Gray dashed line: The accuracy is 89\% from unprotected grayscale queries. Accuracy is computed via ArcFace embeddings and cosine similarity.
	}
	\label{fig:reid_accuracy}
\end{figure}
The accuracy drops significantly under all privatized settings compared with the accuracy on unprotected grayscale queries (89\%), confirming the effectiveness of our region-adaptive pixelization in suppressing re-identification. Furthermore, the flexibility of parameters enables customizable privacy–utility trade-offs, and complements the PPM-100 results by demonstrating robustness against identity inference.

\subsubsection{Ablation Analysis and Baseline Comparison} 
In Fig.~\ref{fig:reid_accuracy} (b), the baseline case $n=1$ applies uniform noise without region differentiation~\cite{10.1007/978-3-319-95729-6_10}, yielding strong privacy but significantly degraded utility (Top-1 accuracy $=$ 2\%). This highlights the cost of undifferentiated protection. In Fig.~\ref{fig:reid_accuracy} (c), increasing $\epsilon$ improves accuracy, indicating weaker privacy. This demonstrates that when no noise is added, privacy protection is insufficient. These trends emphasize the need for balanced parameter tuning to maintain both privacy and utility.

\section{Discussion}
The parallel region-adaptive DP method preserves formal differential privacy, ensuring that adding or removing $m$ pixels yields only bounded changes in the output, governed by $\epsilon$.  While differential privacy provides provable guarantees, it does not ensure full perceptual anonymity or prevent recognition from structural cues such as body contours. We do not claim to prevent all forms of human or model-based recognition.

Our work focuses on improving the efficiency and utility of the DP pixelization process and reducing storage overhead through lightweight representation. While region-adaptive mask generation (e.g., via segmentation or face detection) can be time-consuming, it is typically shared across tasks and performed once per image. This component can be further optimized in future work, whereas our current contribution lies in accelerating the DP pixelization stage and enabling compact, reversible storage for practical deployment.

Our region-adaptive parallel DP framework has potential for application in the following domains: in human pose estimation and fall detection, small grids are applied to joints and body contours to retain motion cues, while coarse grids are used for static backgrounds; in facial analysis, fine grids preserve local features such as eyes, nose, and mouth, while larger grids obscure forehead or cheeks; in video conference analysis, small grids maintain the structure of facial and hand regions, whereas larger grids anonymize the background; in autonomous driving, detailed structures like road signs and pedestrian contours use small grids, while sky and distant objects are coarsely pixelated; and in retail monitoring, small grids capture customer hand-object interactions, while less informative areas like ceilings or aisle spaces use large grids. 

\section{Conclusion}
We propose a region-adaptive pixelization framework that enforces differential privacy with high utility and real-time performance. By assigning finer grids and stronger noise to sensitive regions (e.g., faces), and coarser grids to backgrounds, our method preserves essential visual structure while enhancing privacy. The grid size $b$ and privacy budget $\epsilon$ can be tuned to balance structural fidelity and privacy strength, making the approach adaptable to diverse applications such as surveillance, healthcare, and activity recognition. GPU-based parallelization enables efficient deployment on high-resolution data. A compact storage design further improves efficiency. Experiments on PPM-100 confirm that tuning $b$ and the subgrid division factor $n$ helps retain human contours. On CelebA, our method resists face re-identification even under weak noise, demonstrating robust protection against identity inference.

We acknowledge that our method, while formally satisfying differential privacy, may still retain perceptible structural cues in critical regions. This raises an important distinction between formal privacy guarantees and perceptual anonymity, which we believe deserves further investigation by the research community. By sharing this work, we aim to provide a foundation for exploring hybrid approaches that combine differential privacy with structural obfuscation.

\bibliographystyle{IEEEtran} 
\bibliography{mybibfile}
\end{document}